\newtheorem{thm}{Theorem}
\newtheorem{lm}{Lemma}
\newtheorem{rem}{Remark}
\newtheorem{Ex}{Example}
\def\p{^{\prime}}
\def\1{^{[1]}}
\def\R{\mathbb{R}}
\DeclareMathSymbol{\I}{\mathord}{operators}{`i}
\DeclareMathSymbol{\e}{\mathord}{operators}{`e}
\newcommand{\dd}{\,\mathrm{d}}
\providecommand{\keywords}[1]{\textbf{#1}}
\begin{document}
\title{Negative interest rates: why and how?}
\author{Jozef Kise{\v l}\'ak, Philipp Hermann and  Milan Stehl\'\i k%\\
%Institute of Mathematics,\\
%P.J. \v Saf\'arik Univesity in Ko\v sice, Slovak Republic\\
%Department of Applied Statistics,\\
 %Johannes Kepler University Linz,Linz, Austria\\
%Departamento de Matem\'{a}tica, \\
%Universidad T\'{e}cnica Federico Santa Mar\'{\i}a, Valpara\'iso, Chile.
}
\date{}% No date.
\maketitle

 \begin{abstract}
The interest rates (or nominal yields) can be negative, this is an unavoidable fact which has already been visible during the Great Depression (1929-39). Nowadays we can find negative rates easily by e.g.~auditing. Several theoretical and practical  ideas how to model and eventually  overcome empirical negative rates  can be suggested, however, they are far beyond a simple practical realization.
In this paper we discuss the dynamical reasons why negative interest rates can happen in the second order differential dynamics  and how they can influence the variance and expectation of the interest rate process.
Such issues are highly practical, involving e.g.~banking sector  and pension securities.

 \end{abstract}

\textbf{Keywords: }\keywords{negative interest rate, 2nd order dynamics, Wiener Process, Expectation,Variance}

\textbf{ AMS Subject Classification} 34F05, 91B70
%================= Text entry area ================================%
\section{Introduction}
 \cite{Gesell} was the first to introduce the idea of negative interest rates equaling a ``carry tax'' on currency, such that holding money will affect the holder in a way to pay for this financial strategy. Principally, this tax would counteract to hoarding currency, because these hoarding costs would equal or exceed those costs resulting from negative interest rates such that one would rather deposit or lend currency than hoard it. One has to remind that this tax would only then be imposed if the zero bound became a constraint on monetary policy. Consequently, this monetary tool only has to be introduced rarely and temporarily limited, see \cite{Goodfriend}. Hereby, one has to consider two types of interest rates, i.e. nominal interest rates (yields), in a simple way explained as the payable rate, defined in contracts e.g.~corresponding to market prices, as well as real interest rates (also defined as the purchasing power of the interest payments received), where one has to subtract the inflation rate from nominal interest rates. Naturally, rates in a general consideration can be negative (see e.g.~\cite{Odekon, AndersonLiu} among others). In order to verify the payment of negative interest rates, \cite{Gesell} proposes physically stamping currency in the mid of 20$^{th}$ century, whereas \cite{Goodfriend} exploits the technological development indicating that this issue can be clarified with the aid of e.g.~magnetic strips on bills.
Recent discussion on investors safety with respect to negative interest rates (is given in e.g.~\cite{AndersonLiu}).

However, the fact that nominal interest rates become zero or deceed this threshold result from specific circumstances, where \cite{AndersonLiu} consider fear or uncertainty as main forces. \cite{Buiter} introduce the events of 11 September 2001 as additional possible external cause for a constraint of the nominal interest rate to the lower bound. Therefore, \cite{Ahearne} have studied Japans experience with this situation in order to ``discuss the zero bound problem'' and to prevent deflation. However, negative interests are not seen positively from any perspective, because e.g.~\cite{AndersonLiu} criticize negative interest rates with the aid of two arguments: i) the rates would be set to a negative value only when economic conditions are so weak that the central bank has previously reduced its polity rate to zero and ii) negative interest rates may be interpreted as a tax on banks - a tax that is highest during periods of quantitative easing. Generally, \cite{Buiter} consider that there exist two possibilities to either avoid or escape from zero bound trap. On the one hand they suggest to wait and hope for some positive shock to the effective demand of goods and services and on the other hand they propose to tax currency. Moreover, the authors define that the rate of returns on money, i.e. for coin and currency, can be zero, however, for liabilities of private deposit-taking institutions, which are most of the broader monetary aggregates, this rate will generally be positive.

The literature, see e.g.~cite{Goodfriend, Cecchetti, Buiter} among others, has been focusing on the issue of negative interest rates, which happened empirically in the 1930s and early 1940s for U.S. Treasury bonds or by two European central banks as Riksbank (Sweden) and Danmark Nationalbank (Denmark), respectively. The two Scandinavian banks operate with three policy rates, i.e. term deposit rate, overnight repo rate and lending rate. Occasionally the less important term deposit rate was set below zero by these banks. Media have shed light on this special case, however, due to the very small amounts in negative term deposits rates, in this case for a 7-day and 14-day period deposit respectively, this issue was not that important from the viewpoint of banks (see \cite{AndersonLiu}). These and similar occurrences lead to scholarly interest in overcoming the zero bound such that monetary policy makers fear that standard monetary tools do not operate properly under these circumstances, see \cite{Buiter}. These circumstances raised concern of e.g.~\cite{Keynes} who started to work on the consequences for macroeconomics and monetary policy of the zero bound on nominal interest rates. These thoughts have been reassumed by \cite{Summers91} or \cite{Fisher}, who argue that an inflation exceeding 3\% should be targeted in order to gain scope for a fall of three percentage points of nominal interest rate before hitting the zero bound. \cite{Goodfriend} refers to several studies which have shown that significantly higher costs will arise from higher inflation. Therefore, one might have to consider other solutions, being more satisfactory,  to this problem.

Generally,  \cite{Goodfriend} claims that an introduction of monetary tools such as a ``carry tax'' is only then performable, when programs focusing on introducing this methodology to the public accompany. Therefore, portfolios of individuals should not be heavily dependent on short-term interest rates or generally programs shall clarify the reasonability of negative interest rates for the public. Economic education along the introduction of this tax is equally important as the development and installing the system.

Central banks operate in an economy where fluctuating interest rates keep the business cycle running properly. Hence, these can perform in a way to adjust real interest rate movements in order to target their desired short-term nominal interest rates. When considering zero expected inflation, a zero bound on nominal interest rates would expect real interest rates to be nonnegative. However, this concept of zero bounds creates considerable problems as i) negative interest rates have helped the economy, e.g.~in the Great Depression, in order to recover in a way that these periods are shorter and the effect will only develop to a smaller extent and ii) deflation can raise the expected real interest rates to adjust when nominal rates are at the threshold of zero, see \cite{Goodfriend}. When considering such situations one has to be aware of the fact that storage as well as carry costs accompany with paying negative interest rates on base money. Aforenamed costs are very small yielding that in \cite{Buiter} these are neglected, however, interest rates need to be adjusted by latter named costs which also include insurances against loss, damage or theft. In this context one has to consider that rational economic agents hold those store of values which will give them the best rate-of-return, which is base money as the most liquid store of value. Moreover, these agents will only then choose an alternative store of value which is at least as high as base money. Hence, nominal interest rate on bonds needs to be higher than the carry cost differential. In further consequence this will lead to an adjustment of interest rates for carry costs leading to relevant net financial rates of return.  \cite{Buiter} recall the fact that wealth is more frequently held in currency the poorer persons are, which is leading to a regressive effect for taxing currency. Moreover, they claim that heavily cash-based grey, black and outright criminal economies would then also be taxed when introducing this approach.

Different methods have been proposed by \cite{Cecchetti}  or \cite{Goodfriend} to approach negative interest rates. On the one hand three options have been introduced as: a carry tax on money, open market operations in long bonds, or monetary transfer (see \cite{Goodfriend}) and on the other hand Old-Keynesian or other New-Keynesian analytical models are described as a remedy for negative interest rates (see \cite{Buiter}). These old- and new-Keynesian models use a conventional and forward-looking IS curve, respectively, as well as backward-looking and forward-looking accelerationist Phillips curve, respectively. These models theoretically show that an increase of inflation rates, as proposed by \cite{Bryant} and \cite{Freedman}, will not help to overgo the problem of nominal interest rates at the zero bound. For a more detailed discussion on the application of real GDP, short real rate of interest, nominal stock of currency, general price level and nominal interest rate in the old- respectively new-keynesian model, see \cite{Buiter}. \cite{Goodfriend} claims that imposing a tax on holding money will enable to go beyond the threshold of zero interest rates, hence, one would rather accept negative nominal interest rates than paying for the allowance to hold money. Therefore, introducing this ``carry tax'' would be an option how to deal with negative interest rates. Due to the technological improvement an implementation of this concept is easily realizable. Moreover, the same methodology would also be applicable in terms of currency or exchange rates. The author considers a variable carry tax to be a powerful policy instrument, dependent on monetary aggregate targets of central banks in the case of very low short-term interest rates.

\cite{Buiter} coincide with these authors that this approach is feasible and they claim that it has potential to be more efficient than targeting inflation rates, i.e. that high inflation rates are remedy to minimize the risk of zero interest rates. Moreover, they point out that from the perspective of central banks, administrative problems arise  due to the anonymity when holding money. This fact complicates paying negative interest rates for holding money, where a proper method has to be found such that the currency owner has to reveal himself to pay the ``carry tax''. In contrast to that the ``carry tax'' is easily collected when focusing on electronic commercial bank reserves, where holders are known such that either positive or negative returns can be withdrawn from the deposits.  Hence, imposing this tax yields little or no costs for commercial banks' balances but some administrative costs, possibly to a greater extent, for currency  can result from it. Nevertheless, there are two reasons why interest is not paid on (coin and) currency: i) attractions of seigniorage and ii) administrative difficulties of paying a negative interest rate on bearer bonds, which are debt securities in paper or electronic form whose ownership is transferred by delivery rather than by written notice and amendment to the register of ownership. These are negotiable to the same extent as other money market instruments such as bank certificates of deposit or bills of exchange. Coin and currency can be seen as bearer bonds issued by central banks in our setup. For both, the state and private agents, significant costs arise due to fundamental information asymmetry (see \cite{Buiter}). For reasons of clarification it is mentioned that paying negative interest rates, as introduced in \cite{Buiter}, is exactly the same as the ``carry tax'' proposed by \cite{Goodfriend} to overcome the zero bound.

\cite{Krugman} actually suggests that central banks, in the case when the zero bound is hit, should operate in a way to adjust the rate of inflation for some period of time such that the real interest rate becomes negative. However, the author does not provide a more detailed explanation to approach this target. He was criticized in \cite{Goodfriend} for not considering that central banks with this power could directly stimulate spending instead. From the perspective of central banks acting in a way only to react on the basis of inflation rates is  practically a very challenging one, because on the one hand knowing when and on the other hand to which extent adjustments are needed would be of major importance.  However, \cite{McCallum} proposes to use the foreign exchange rate to be a powerful policy tool when central banks have to find remedy when targeting the zero threshold. More precisely, with this policy rule the author claims that one can stabilize inflation and output. For a more detailed discussion on the effects of exchange rate depreciation leading in higher net exports such that nominal short interest rates adjust immediately foreign currency interest rates, see \cite{McCallum}.

Removing storage costs or user fees for electronic reserve balances collected from central banks will naturally set a zero interest floor on interbank interest  rate without respect of physical costs of storing (any) currency. Hence, short rates are pulled down to zero due to the competition between banks and in further consequence long-term rates will move down with the average of expected future short rates over the relevant horizon, see \cite{Goodfriend}.

\cite{Buiter} discuss the liquidity trap, describing situations where private agents absorb any amount of real money ceteris paribus, i.e. not changing their behavior to any extent. However, they recall modern theories where the short riskless nominal interest rate on government debt can be interpreted as opportunity costs of holding currency. In this case the lower threshold of nominal interest rates is seen to be zero. These two monetary policy tools (liquidity traps and lower bound on nominal interest rates) were of great interest during periods of high inflation such as 1970s or 1980s. In \cite{McCallum} liquidity traps are described as situations in which central bank's usual policy instrument cannot be lowered past a prevailing zero lower bound (or possibly some negative lower bound). This leads to the fact that potential stabilizing powers of monetary policy can be nullified by the occurrence of a ``liquidity trap''.

\cite{Evans} have proposed that rational expectations (RE) need to be assumed when modeling expectations. Moreover, the authors claim that rational expectations are modeled with the aid of conditional (on available information for the decision makers) expectations on the basis of relevant variables. In further consequence when modeling rational expectations one also has to consider ``adaptive learning'', which represents the adjustment of the forecast rule of agents when new data becomes available over time. The authors provide an example where agents have to periodically relearn the relevant stochastic processes when an economy occasionally ``undergoes structural shifts''. These shifts accompany with ``overcoming the zero bound'', when e.g.~applied on the discussed case.

\cite{Buiter} stress that rational expectations equilibria are only then of economical interest, when they can be seen as E-stable, which means that they are asymptotically stable under least squares learning. Following to that they address that learning rules as given by \cite{Evans} need to be evidently applicable due to e.g.~cognitive psychology, showing that agents actually tend to behave as described by the model, which still has to be shown.

\cite{Stehlik15} introduced a modified Parker second order model of interest rates. Thereby, their first example refers to ``Oscillatory interest rate-deterministic part''. This method enables to compute negative interest rates with the aid of the Parker Equation as a solution for this setup. Another broadly used model which can operate with negative interest rates is introduced by \cite{Vasicek}. More precisely this Ornstein-Uhlenbeck (Gaussian) process is often used to derive equilibria models for discount bond prices.
However, the literature, e.g.~\cite{Chan}, proposes to interpret the interest rate as a real but not as a nominal one in this model and frequently criticizes these approaches for the possibility of negative rates as well as the implication of homoscedastic interest rate changes.
In the next section we introduce the realistic and relatively simple model of interest rate which has a potential to understand dynamics of negative rates.

\section{Modelling of interest rate}

Assume that the model describing the future evolution of interest rates
$r_t$ can be written, for $t\in I=[0,\infty)$, by
\begin{eqnarray}\label{eq:stochsys}
 {\mathrm d}r_t&=&c(t)\,(p_t)^m\,{\mathrm d}t \nonumber\\
 {\mathrm d}p_t&=&\left[a(t)\,(p_t)^l+b(t)\,(r_t)^n\right]{\mathrm d}t+\sigma(t)\,(r_t)^k\,{\mathrm d}W_t,
\end{eqnarray}
where $a,b,c,\sigma\in C(I)$ and $n,k,l\in \mathbb{Q}, ~m\in\mathbb{Q}\setminus \{0\}$ such that their denominators are odd numbers. This
 simplification is used to avoid problems with the definition of the domain of a power function.
Otherwise a so-called signed power function $\Phi(z):=|z|^{\alpha-1}\,z, ~\alpha, ~z\in\mathbb{R}$ can be used
(if necessary it is used throughout the article without changing the denotation).
 Moreover, we assume that $\sigma$ is positive and $c(t)\not\equiv 0, ~c(0)\ne 0$.
This model naturally generalizes the linear model in \cite{Parker} $~(c(t)\equiv 1, a(t)\equiv a, b(t)\equiv b, \sigma(t)=\sigma, m=n=l=1, k=0)$ and also
two nonlinear models studied in \cite{Stehlik15}.
Using the classical result, see Theorem \ref{exi} in the Appendix, we can directly obtain unique results.
In our case we have $n=2, ~~X_t=(r_t,s_t)$ and 	
$$b(t,X_t)=\left(c(t)\,(p_t)^m, ~a(t)\,(p_t)^l+b(t)\,(r_t)^n\right)^T,$$
$$\Sigma(t,X_t)=\left(\begin{array}{cc}
0 & 0\\
0 &\sigma(t)\,(r_t)^k
\end{array}\right).
$$
Existence is always secured for the linear case, but not in general.
Changing values of $k, l, m$ or $n$ either violates uniqueness or causes a blow-up.
The fundamental tool for transformations of SDE's is It\^o's Lemma \ref{ito} (the version
given in Appendix is due to \cite{Oksendal}.
% In addition, for given exact form of the It\^o proces
% lemma \ref{iso} (the It\^o isometry) could be a very usefull tool for obtaining moments of the given process.\\
With Lemma \ref{ito} we are ready for the setup of a transformation to remove
the level dependent noise. It{\^o}'s formula can be used to solve SDE's,
although the class of equations that are solvable in this fashion is limited. Next Theorem \ref{t:tran} helps us to find explicit solutions
of the problem \eqref{eq:stochsys} in specific cases and could be a useful supplementary tool for solving SDE's.

\begin{thm}\label{t:tran}
 Let $(r_t,p_t)$ be a diffusion process as in \eqref{eq:stochsys}, then the
 transformation $$\psi_1(s_t,r_t,t)=r_t, ~~z_t=\psi_2(p_t,r_t,t)=\frac{t}{2}-W_t+\frac{p_t}{\sigma(t)\,(r_t)^k}$$
 will result in a deterministic differential system involving Wiener process (ODE with random coefficients)
\begin{eqnarray}\label{eq:detsys}
 {\mathrm d}r_t&=&c(t)\,\sigma(t)^m\,(r_t)^{km}\left(u_t\right)^m\,{\mathrm d}t \\
 {\mathrm d}z_t&=&\left[\frac{1}{2}-\frac{\sigma\p(t)}{\sigma(t)}\,u_t-k\,c(t)\,\sigma(t)^m\,(u_t)^{m+1}(r_t)^{mk-1}+a(t)\,\sigma(t)^{l-1}\,(u_t)^l\,(r_t)^{k(l-1)}+\frac{b(t)}{\sigma(t)}\,(r_t)^{n-k}\right]{\mathrm d}t,\nonumber
\end{eqnarray}
where $u_t=z_t+W_t-\frac{t}{2}.$
\end{thm}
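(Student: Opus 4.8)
The first observation is that the new variable is essentially a disguise for an algebraic quantity: by the very definition of $z_t$ one has $u_t=z_t+W_t-\frac t2=\dfrac{p_t}{\sigma(t)\,(r_t)^k}$, equivalently $p_t=u_t\,\sigma(t)\,(r_t)^k$, so the map $(r_t,p_t)\mapsto(r_t,z_t)$ is smoothly invertible on the region where $r_t\ne0$ and $\sigma(t)>0$. Given this, the $r$-equation in \eqref{eq:detsys} is immediate and needs no stochastic calculus: since $r_t$ has no diffusion term, ${\mathrm d}r_t=c(t)\,(p_t)^m\,{\mathrm d}t$, and substituting $p_t=u_t\,\sigma(t)\,(r_t)^k$ yields $c(t)\,\sigma(t)^m\,(r_t)^{km}\,(u_t)^m\,{\mathrm d}t$.

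For the $z$-equation the plan is to write $z_t=h(t,p_t,r_t)-W_t$ with $h(t,p,r)=\frac t2+\dfrac{p}{\sigma(t)\,r^k}$ and apply It\^o's Lemma \ref{ito} to $h(t,p_t,r_t)$. The computation simplifies dramatically for three reasons: the system is driven by a single Wiener process, $r_t$ has zero diffusion coefficient (so $({\mathrm d}r_t)^2={\mathrm d}p_t\,{\mathrm d}r_t=0$), and $h$ is affine in $p$ (so $\partial^2 h/\partial p^2=0$); hence every second-order It\^o term vanishes and the formula collapses to ${\mathrm d}h=h_t\,{\mathrm d}t+h_p\,{\mathrm d}p_t+h_r\,{\mathrm d}r_t$. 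One computes, re-expressing everything through $u_t$,
\[
h_t=\tfrac12-\tfrac{\sigma\p(t)}{\sigma(t)}\,u_t,\qquad h_p=\tfrac{1}{\sigma(t)\,(r_t)^k},\qquad h_r=-\tfrac{k\,u_t}{r_t}.
\]
Inserting the dynamics \eqref{eq:stochsys} into $h_p\,{\mathrm d}p_t$, the stochastic part is $h_p\,\sigma(t)\,(r_t)^k\,{\mathrm d}W_t={\mathrm d}W_t$, which cancels exactly against the $-{\mathrm d}W_t$ carried by $z_t=h-W_t$; so ${\mathrm d}z_t$ is a pure drift.

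It then remains to collect that drift. Using $(p_t)^m=(u_t)^m\sigma(t)^m(r_t)^{km}$ one rewrites $h_r\,c(t)\,(p_t)^m$ as $-k\,c(t)\,\sigma(t)^m(u_t)^{m+1}(r_t)^{mk-1}$; using $(p_t)^l=(u_t)^l\sigma(t)^l(r_t)^{kl}$ one rewrites $h_p\,a(t)\,(p_t)^l$ as $a(t)\,\sigma(t)^{l-1}(u_t)^l(r_t)^{k(l-1)}$; and $h_p\,b(t)\,(r_t)^n$ becomes $\frac{b(t)}{\sigma(t)}(r_t)^{n-k}$. Adding these to $h_t$ gives precisely the bracketed coefficient in \eqref{eq:detsys}, completing the identification.

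The derivation is conceptually routine; the only points that deserve attention --- the ``hard part'' being bookkeeping rather than ideas --- are (i) that It\^o's Lemma \ref{ito} genuinely applies, which requires $h$ to be $C^{1,2}$ on the relevant domain and thus rests on the standing hypotheses ($\sigma>0$ and smooth enough that $\sigma\p$ exists, $r_t\ne0$, the odd-denominator/signed-power conventions making the powers well defined) together with the existence and uniqueness of the solution from Theorem \ref{exi}; and (ii) keeping the fractional exponents consistent when simplifying $(r_t)^{km}/r_t=(r_t)^{km-1}$ and $(r_t)^{kl}/(r_t)^{k}=(r_t)^{k(l-1)}$, so that the four drift contributions line up with the stated coefficients.
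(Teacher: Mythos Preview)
Your argument is correct and is precisely the computation the paper has in mind: the paper does not spell out a proof of Theorem~\ref{t:tran} but merely signals, just before the statement, that It\^o's Lemma~\ref{ito} is the tool to remove the level-dependent noise. Your derivation---inverting the transformation to get $p_t=u_t\,\sigma(t)\,(r_t)^k$, noting that $r_t$ has no diffusion term so all second-order It\^o contributions vanish (since $h$ is affine in $p$), and checking that $h_p\,\sigma(t)\,(r_t)^k\,{\mathrm d}W_t={\mathrm d}W_t$ cancels the explicit $-{\mathrm d}W_t$---fills in exactly those omitted details.
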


\begin{rm}
Since our model does not only consist of SDE but also of an initial conditions,
we have to specify them. Usually $r_0=A, ~0<A\ll 1$ and $\left.\frac{\mathrm\mathrm{d}r_t}{\mathrm\mathrm{d}t}\right|_{t=0}=c(0)\,(p_0)^m=:B,
~~ 0\leq |B|\ll1$.
Then after the transformation used in Theorem \ref{t:tran} we have
$z_0=\frac{B^\frac{1}{m}}{c(0)^\frac{1}{m}\sigma(0)A^k}:=\tilde B$.
\end{rm}

\begin{Ex}[Linear case]
For $k=0, ~l=m=n=1, ~a(t)\equiv a, ~b(t)\equiv b, ~c(t)\equiv c$
the solution of the SDE can be studied in terms of the roots of the
characteristic equation of the process, similarly as in \cite{Parker}.
Three possibilities can occur due to quadratic equation with discriminant $D=4\,c\,b+a^2$.
\begin{itemize}
 \item $D=0$ - real and equal,
 \item $D>0$ - real and distinct,
 \item $D<0$ - complex conjugate.
\end{itemize}
The expected values and autocovariance functions of the force can be derived directly
as explicit solution can be found or by using the linearity of the process.
\end{Ex}

\subsection{Case study I}
  Case: $k=0, l=1, b(t)\equiv 0$. Then equation \eqref{eq:stochsys} reduces to a
  simple nonlinear model
 \begin{eqnarray}\label{eq:kab0}
 {\mathrm d}r_t&=&c(t)\,(p_t)^m\,{\mathrm d}t \nonumber\\
 {\mathrm d}p_t&=&a(t)\,p_t\,{\mathrm d}t+\sigma(t)\,(r_t)\,{\mathrm d}W_t,
\end{eqnarray}
with the corresponding deterministic system
\begin{eqnarray}\label{eq:syskab0}
 {\mathrm d}r_t&=&c(t)\,\sigma(t)^m\left(u_t\right)^m\,{\mathrm d}t \nonumber\\
 {\mathrm d}z_t&=&\left[\frac{1}{2}-\frac{\sigma\p(t)}{\sigma(t)}\,u_t+a(t)\,u_t\right]{\mathrm d}t.
\end{eqnarray}
Several Figures 1-5 illustrate the dynamics of the process determined by the system \eqref{eq:kab0}.
Here notice that a trajectory of $r_t$ is smooth but its derivative possesses  non-smoothness of the trajectory.
A general solution can be found directly (using It\^o calculus for \eqref{eq:kab0}),
or using Theorem \ref{t:tran}, since the second equation in \eqref{eq:syskab0} is linear 1st order ODE
not depending on $r_t$ explicitly and the first one can be subsequently integrated. Using initial conditions we get the following result.

\begin{lm}\label{lm:e}
Denote a process $R_t$ as the solution of Cauchy problem \eqref{eq:stochsys} with
$k=0, l=1, ~b(t)\equiv 0, ~m>0$ and $R_0=A, ~R\p_0=B$, then
$$R_t=A+\int _{0}^{t}\! \left[ {{\rm e}^{-\int _{0}^{u}{\frac {{\frac {d}{
ds}}\sigma \left( s \right) }{\sigma \left( s \right) }}-a \left( s
 \right) {ds}}} \left( \frac{1}{2}\,\int _{0}^{u}\!{{\rm e}^{\int _{0}^{s}
{\frac {{\frac {d}{dv}}\sigma \left( v \right) }{\sigma \left( v
 \right) }}-a \left( v \right) {dv}}} \left( 1+ \left( -{\frac {{
\frac {d}{ds}}\sigma \left( s \right) }{\sigma \left( s \right) }}+a
 \left( s \right)  \right)  \left( 2\,W_s -s \right)
 \right) {ds}+ \right.\right.$$
 $$+\left.\left.\left( {\frac {B}{c \left( 0 \right) }} \right) ^{\frac{1}{m}}  \sigma \left( 0  \right) ^{-1} \right) +W_u -\frac{u}{2}\right] ^{m} \sigma \left( u \right)
^{m}c \left( u \right) {du}
.$$ Moreover,
\begin{itemize}
 \item For $B=0, \sigma(t)\equiv \sigma, a(t)\equiv 0$
$$E[R_t]=\begin{cases}
  \displaystyle A+\sigma^m\,L(m;c),& \mbox{if m is even},\\
 A, & \mbox{if m is odd}
 \end{cases}$$
 and
 $$Var[R_t]=\sigma^{2m}\left[\int_0^t\int_0^tc(s)\,c(u)\,\mathcal{E}_m(s,u)\dd s\dd u-L^2(m;c)\right],$$
 where
 $$L(m;c):=\frac{m!}{2^\frac{m}{2}(\frac{m}{2})!}\int_0^t c(u)\,u^\frac{m}{2}\mathrm\mathrm{d}u,$$ and
 $$ \mathcal{E}_m(s,u)=\begin{cases}
   \displaystyle\frac{(m!)^2}{2^{m}}\sum_{j=0}^\frac{m}{2}\frac{(2\frac{\min{(s,u)}}{\sqrt{s\,u}})^{2j}}{(2j)!\left[\left(\frac{m}{2}-j\right)!\right]^2}, & \mbox{if} ~~m ~\mbox{is even},\\
   \displaystyle\frac{(m!)^2}{2^{m}}\sum_{j=0}^\frac{m-1}{2}\frac{(2\frac{\min{(s,u)}}{\sqrt{s\,u}})^{2j+1}}{(2j+1)!\left[\left(\frac{m-1}{2}-j\right)!\right]^2}, & \mbox{if} ~~m ~\mbox{is odd}.\\
  \end{cases}$$
 \item For $B\ne 0, \sigma(t)\equiv \sigma, a(t)\equiv 0$
$$E[R_t]=A + \sigma^m \sum_{j=0}^{\lfloor \frac{m}{2}\rfloor } \binom m{m-2j} \left(\frac{c_1}{\sigma}\right)^{m-2j} \frac{(2j)!}{2^j j!} \int_0^t c(u)u^j \dd u,$$
 $$Var[R_t]\approx A^2+2Am\,c_1^{m-1}\sigma\int_0^t c(u)\,\mathrm{d}u+
 m^2\,\sigma^2\,c_1^{2m-2}\left(\int_0^t c(u)\,\mathrm{d}u\right)^2+$$
 $$+ c_1^{2m}\int_0^t\int_0^t c(u)\,c(s)\,\min{(u,s)}\,\mathrm{d}u\mathrm{d}s,$$
 where $c_1= \left({\frac {B}{c \left( 0 \right) }} \right) ^{\frac{1}{m}}$.
 \end{itemize}
 \end{lm}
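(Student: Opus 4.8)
The plan is to use Theorem~\ref{t:tran} to turn the SDE system into the pathwise ODE system~\eqref{eq:syskab0}, solve that system in closed form (its $z$-equation is a scalar linear first-order ODE in the continuous input $W_\cdot$, and its $r$-equation is then a direct quadrature), and finally substitute the specialisations $B=0$, $\sigma(t)\equiv\sigma$, $a(t)\equiv0$ (resp.\ $B\ne0$) into the resulting formula and take expectations and second moments.

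First I would apply Theorem~\ref{t:tran} with $k=0$, $l=1$, $b(t)\equiv0$, $m>0$: the process $(R_t,p_t)$ is carried to $(R_t,z_t)$ solving~\eqref{eq:syskab0}, with $u_t=z_t+W_t-t/2=p_t/\sigma(t)$ and, by the Remark, $z_0=\tilde B=(B/c(0))^{1/m}\sigma(0)^{-1}$. Fixing a path of $W$, the second equation of~\eqref{eq:syskab0} reads $z_t'-g(t)\,z_t=\tfrac12+g(t)(W_t-t/2)$ with $g(t):=a(t)-\sigma'(t)/\sigma(t)$; this is a linear first-order ODE whose unique solution with $z(0)=z_0$ is obtained with the integrating factor $\exp(-\int_0^t g)$:
\[
z_t=e^{\int_0^t g}\Bigl[\,z_0+\int_0^t e^{-\int_0^s g}\Bigl(\tfrac12+g(s)\bigl(W_s-\tfrac s2\bigr)\Bigr)\dd s\Bigr].
\]
Rewriting $\int_0^t g=-\int_0^t(\sigma'/\sigma-a)$ and $\tfrac12+g(s)(W_s-s/2)=\tfrac12\bigl(1+(-\sigma'(s)/\sigma(s)+a(s))(2W_s-s)\bigr)$ puts this in the displayed form. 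Integrating the first equation of~\eqref{eq:syskab0} with $R_0=A$ then gives $R_t=A+\int_0^t c(u)\sigma(u)^m\bigl(z_u+W_u-u/2\bigr)^m\dd u$, and inserting the above expression for $z_u$ yields exactly the claimed integral representation; everything is adapted since $z_\cdot$ is a pathwise-continuous functional of $W_\cdot$.

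For the moment formulas I would specialise the representation. If $B=0$, $\sigma(t)\equiv\sigma$, $a(t)\equiv0$, then $g\equiv0$ and $z_0=0$, so $z_u=u/2$, hence $u_u=W_u$ and $R_t=A+\sigma^m\int_0^t c(u)W_u^m\dd u$; by Fubini ($c$ continuous, all moments finite) $E[R_t]=A+\sigma^m\int_0^t c(u)E[W_u^m]\dd u$, and $E[W_u^m]=0$ for odd $m$ while $E[W_u^m]=\tfrac{m!}{2^{m/2}(m/2)!}u^{m/2}$ for even $m$ gives the stated $E[R_t]$. For the variance, $Var[R_t]=E[R_t^2]-(E[R_t])^2$ with
\[
E[R_t^2]=A^2+2A\sigma^m\!\int_0^t c(u)E[W_u^m]\dd u+\sigma^{2m}\!\int_0^t\!\int_0^t c(s)c(u)\,E[W_s^mW_u^m]\dd s\,\dd u,
\]
whose first two terms cancel against $(E[R_t])^2$ up to $\sigma^{2m}L^2(m;c)$, leaving $\sigma^{2m}\bigl[\int_0^t\int_0^t c(s)c(u)\mathcal{E}_m(s,u)\dd s\,\dd u-L^2(m;c)\bigr]$ with $\mathcal{E}_m(s,u)=E[W_s^mW_u^m]$. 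To evaluate $\mathcal{E}_m$ I would take $s\le u$, split $W_u=W_s+(W_u-W_s)$ with the increment independent of $W_s$, expand $W_u^m$ by the binomial theorem, and use the one-dimensional Gaussian moments of $W_s$ and of the increment; collecting the parity-matched terms and the double factorials yields the two series in $\min(s,u)/\sqrt{su}$ (equivalently, expand $x^m$ in Hermite polynomials and use $E[H_i(W_s/\sqrt s)H_j(W_u/\sqrt u)]=\delta_{ij}\,i!\,(\min(s,u)/\sqrt{su})^i$). For $B\ne0$ (still $\sigma,a$ constant/zero) the same solution gives $z_u=u/2+c_1/\sigma$ with $c_1=(B/c(0))^{1/m}$, so $u_u=W_u+c_1/\sigma$ and $R_t=A+\sigma^m\int_0^t c(u)(W_u+c_1/\sigma)^m\dd u$; a binomial expansion of $(W_u+c_1/\sigma)^m$ and termwise integration give the exact $E[R_t]$, while a first-order expansion of $R_t$ in the Wiener noise (replacing $(W_u+c_1/\sigma)^m$ by $(c_1/\sigma)^m+m(c_1/\sigma)^{m-1}W_u$) followed by computation of the second moment of the resulting affine functional of $\int_0^t c(u)W_u\dd u$ gives the stated estimate, which is what the relation $\approx$ in the statement reflects.

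The routine ingredients are Theorem~\ref{t:tran}, the solution of a first-order linear ODE, a quadrature, the binomial theorem, and Fubini. The one genuinely technical point, and the main obstacle, is the closed-form evaluation of the Gaussian joint moment $\mathcal{E}_m(s,u)=E[W_s^mW_u^m]$: keeping the parity restrictions straight, treating even and odd $m$ separately, and reorganising the double sum into a single series in $\min(s,u)/\sqrt{su}$ with the correct double-factorial coefficients. The Hermite-polynomial route makes the orthogonality transparent, but the combinatorial bookkeeping still has to be carried out carefully.
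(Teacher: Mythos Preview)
Your proposal is correct and follows essentially the same route as the paper: apply Theorem~\ref{t:tran}, solve the linear first-order ODE for $z_t$ by an integrating factor, integrate the $r$-equation, then specialise to $B=0$ and $B\ne0$ with constant $\sigma$ and $a\equiv0$, compute the expectation via Fubini plus Gaussian moments (Lemma~\ref{lemmastar}), and handle the $B\ne0$ variance by a first-order Taylor approximation of the integrand.

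The only substantive difference is in how you propose to obtain the closed form of $\mathcal{E}_m(s,u)=E[W_s^mW_u^m]$. You plan to derive it from scratch, either by writing $W_u=W_s+(W_u-W_s)$, expanding with the binomial theorem and combining independent Gaussian moments, or via a Hermite-polynomial expansion; you correctly flag the resulting bookkeeping as the main obstacle. The paper bypasses this entirely by invoking Lemma~\ref{fan} (a classical bivariate-normal moment formula, Isserlis/Kendall) with $Z_1=W_s$, $Z_2=W_u$, $\sigma_1=\sqrt{s}$, $\sigma_2=\sqrt{u}$, $\mathrm{cor}(Z_1,Z_2)=\min(s,u)/\sqrt{su}$, and $s_1=s_2=m$, which yields the stated series immediately. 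Your approach is more self-contained but longer; citing Lemma~\ref{fan} would spare you exactly the combinatorics you identify as the delicate step.
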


\begin{proof}
The form of the process $R_t$ can be found directly or by using Theorem \ref{t:tran}.
We have to split the proof in two cases in dependence of parity of $m$.
For  $B=0$ we have
$$
R_t = A + \sigma^m \int_0^t c(u)\,W_u^m \dd u .
$$
Since we can interchange the order of expectation and integration, we have
\begin{align*} E[R_t] &= A + \sigma^m \int_0^t E[W_u^m]c(u)\dd u \underset{\text{Lemma \ref{lemmastar}}}{=}
 A + \sigma^m \int_0^t \frac{m!}{2^{\frac m2} (\frac m2)!} u^{\frac m2} c(u) \dd u \\
&= A + \sigma^m L(m;c), \quad \text{if m is even}.
\end{align*}
Using Lemma \ref{lemmastar} we directly obtain that $E[R_t] = 0$ holds for odd $m$.

Further, we are able to compute variances. One can use Isserlis theorem, but computationally it is more conveniente to use
Lemma \ref{fan}, see \cite{Isserlis18} or \cite{Ken77}[p. 94].
\begin{align*}
E[R_t^2] &= A^2 + 2A\sigma^m L(m;c) + \sigma^{2m}E\left[\left(\int_0^t c(u) W_u^m \dd u\right)^2\right] = \\
&A^2 + 2A\sigma^m L(m;c) +  \sigma^{2m}\int_0^t\int_0^tc(s)\,c(u)\,E[W^m_s\,W^m_u]\dd s\dd u \underset{\text{Lemma \ref{fan}}}{=} \\
&A^2 + 2A\sigma^m L(m;c) +\sigma^{2m}\int_0^t\int_0^tc(s)\,c(u)\,\mathcal{E}_m(s,u)\dd s\dd u,
\end{align*}
\begin{align*}\Rightarrow Var(R_t) &= E[R_t^2] - A^2 - 2A\sigma^m L(m;c) - \sigma^{2m}L^2(m;c) = \\
&=  \sigma^{2m}\left[\int_0^t\int_0^tc(s)\,c(u)\,\mathcal{E}_m(s,u)\dd s\dd u-L^2(m;c)\right].\end{align*}

For $B\ne 0$
\begin{align*}
R_t &= A + \int_0^t c(u) [W_u\sigma + c_1]^m \dd u, \quad c_1 =
\frac{B^{\frac 1m}}{c(0)^{\frac 1m}}\\
E[R_t] &= A + \int_0^tc(u)E\left[(W_u \sigma + c_1)^m\right]\dd u =\\
&= A+\int_0^t c(u)E\left[\sum_{k=0}^m \binom mk c_1^k\,{(W_u\sigma)^{m-k} \dd u}\right]= \\
&= A + \int_0^t c(u) \sigma^m \sum_{k=0}^m \binom mk \left(\frac{c_1}{\sigma}\right)^k E[W_u^{m-k}]\dd u = \\
&= A + \sigma^m \sum_{k=0}^m \binom mk \left(\frac{c_1}{\sigma}\right)^k \int_0^t c(u) E[W_u^{m-k}] = \\
&= A + \sigma^m \sum_{l=0}^m \binom m{m-l} \left(\frac{c_1}{\sigma}\right)^{m-l} \int_0^tc(u)E[W_u^l]\dd u =\\
&= A + \sigma^m \sum_{j=0}^{\lfloor \frac{m}{2}\rfloor } \binom m{m-2j} \left(\frac{c_1}{\sigma}\right)^{m-2j} \int_0^t c(u) u^j \frac{(2j)!}{2^j j!} \dd u =\\
&= A + \sigma^m \sum_{j=0}^{\lfloor \frac{m}{2}\rfloor } \binom m{m-2j} \left(\frac{c_1}{\sigma}\right)^{m-2j} \frac{(2j)!}{2^j j!} \int_0^t c(u)u^j \dd u
\end{align*}

Since we cannot use Lemma \ref{fan} for $B\ne 0$ , we use Taylor approximation of integrand of $R_t$ instead. We have
$$R_t\approx A+\int_0^tc(u)\left[m\,c_1^{m-1}\sigma+c_1^m\,W_u\right]\,\mathrm{d}u\Rightarrow$$
$$R_t^2\approx A^2+2A\int_0^tc(u)\left[m\,c_1^{m-1}\sigma+c_1^m\,W_u\right]\,\mathrm{d}u+
\left(\int_0^tc(u)\left[m\,c_1^{m-1}\sigma+c_1^m\,W_u\right]\,\mathrm{d}u\right)^2.$$
Now,
$$E[R_t^2]\approx A^2+2A\,m\,c_1^{m-1}\sigma\int_0^tc(u)\,\mathrm{d}u+m^2\,\sigma^2\,c_1^{2m-2}
\left(\int_0^tc(u)\,\mathrm{d}u\right)^2+c_1^{2m}E\left[\left(\int_0^tc(u)\,W_u\,\mathrm{d}u\right)^2\right]$$
gives us the result.

\end{proof}

\begin{rem}
 Similar result can be analogously derived for more general but fixed $m$.
\end{rem}

\begin{figure}[h!]\label{fig:cos}
\centering
\begin{subfigure}[b]{.45\linewidth}
   \includegraphics[width=0.99\textwidth]{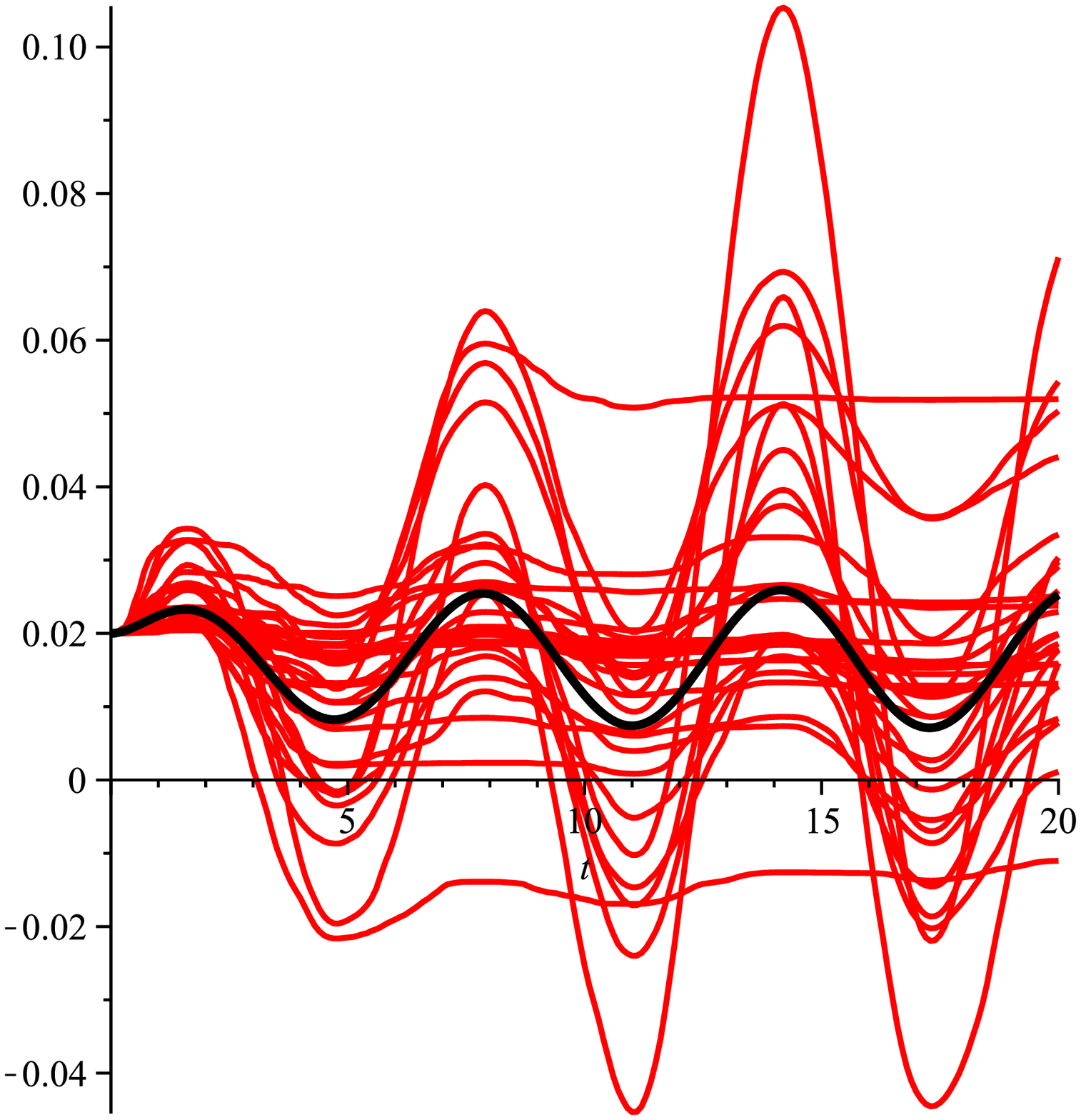}
\label{ob:cos}
\caption{$c(t)=\frac{\cos{t}}{1+t}$.}
 \end{subfigure}
\begin{subfigure}[b]{.45\linewidth}
   \includegraphics[width=0.99\textwidth]{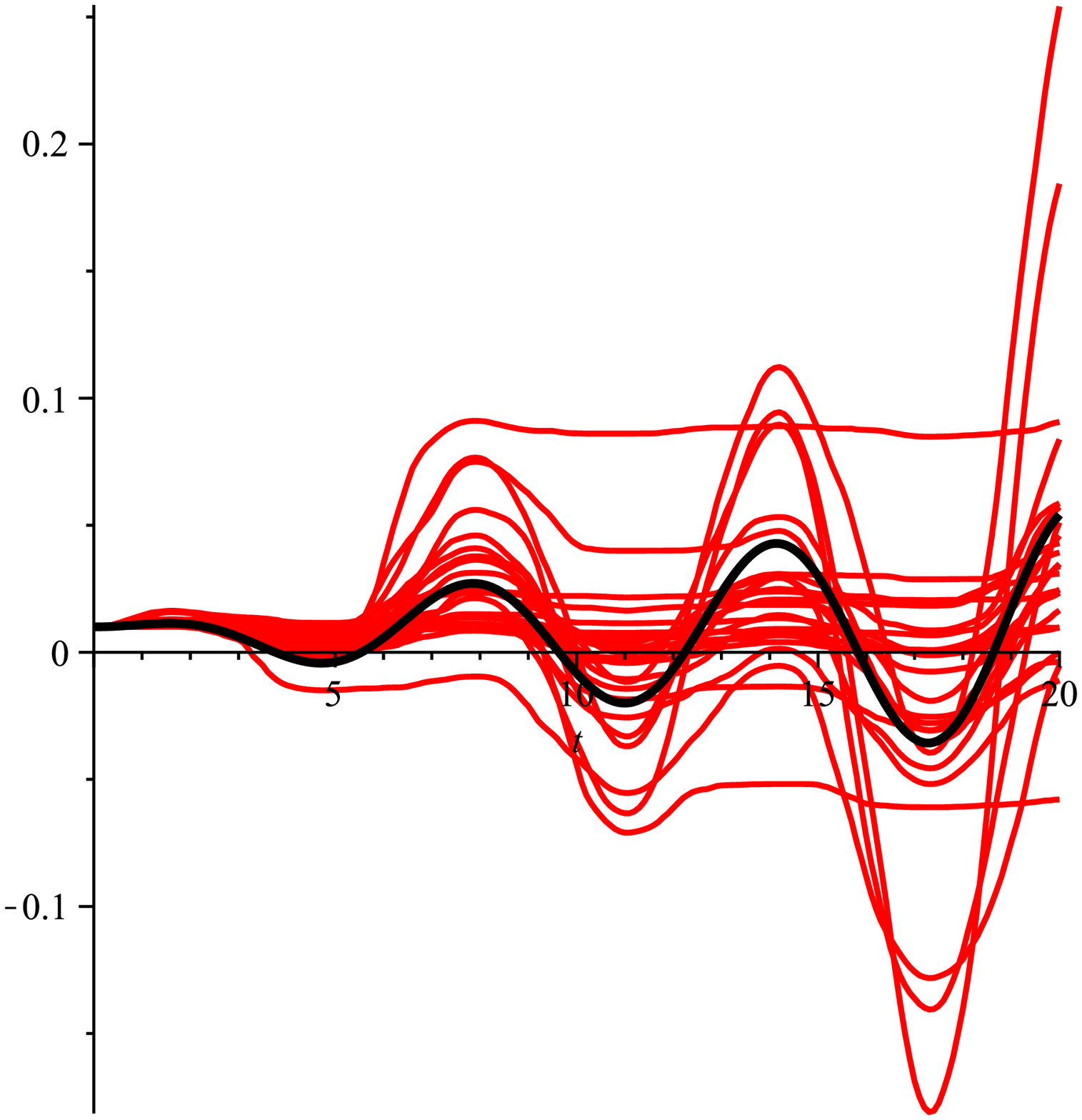}
\label{ob:cos2}
\caption{$c(t)=\cos{t}$.}
 \end{subfigure}
 \caption{Expected value (black) and 25 realization (red) of process $R_t$ defined by \eqref{eq:kab0} with $k=0, ~A=0.02, ~B=0, ~m=2, ~a(t)=b(t)\equiv 0, ~\sigma(t)\equiv 0.05$.}
\end{figure}

\begin{figure}[h!]
\centering
\begin{subfigure}[b]{.45\linewidth}
   \includegraphics[width=0.99\textwidth]{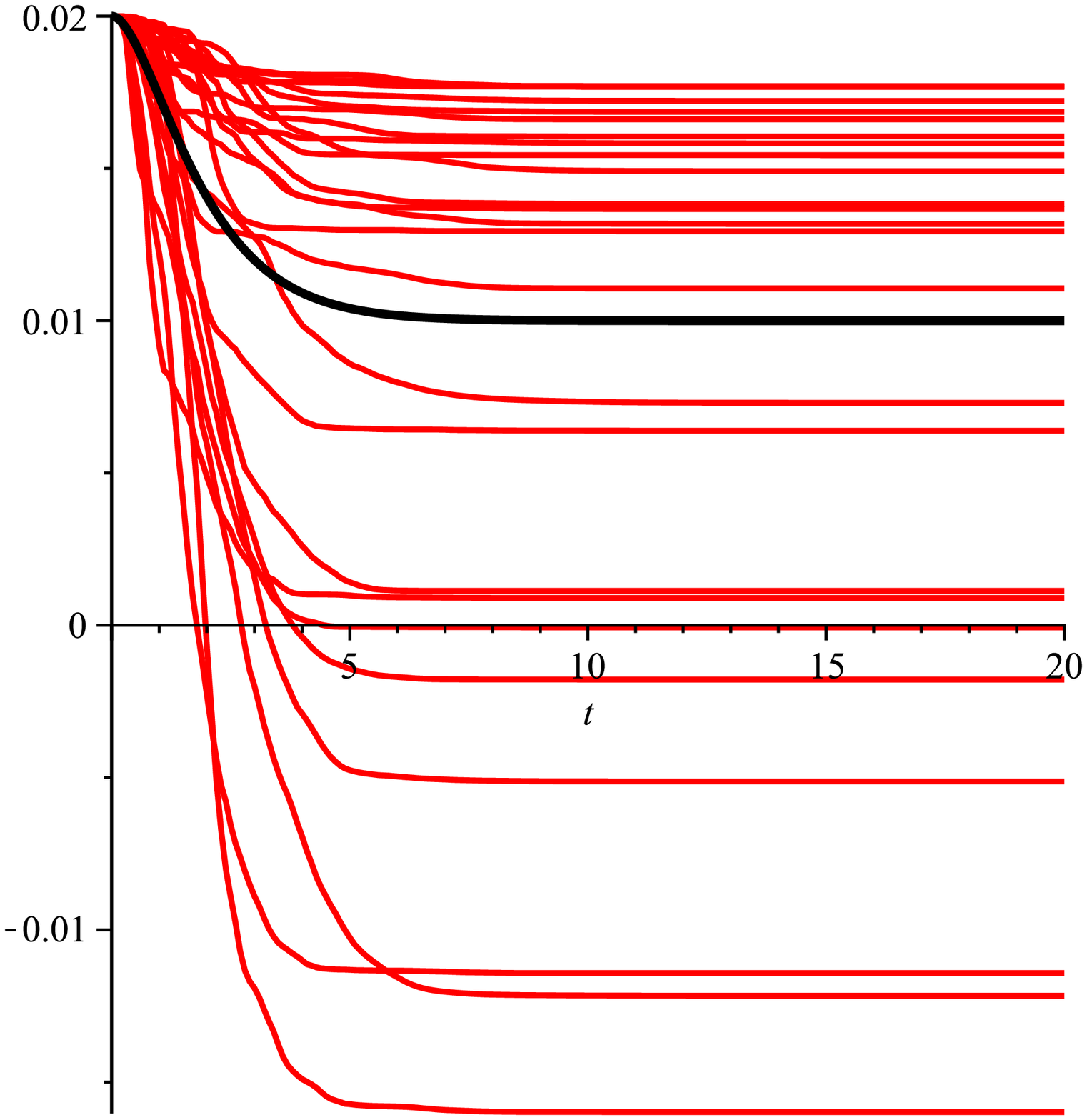}
\subcaption{$\sigma=0.05$. \label{fig:s}}
 \end{subfigure}
\begin{subfigure}[b]{.45\linewidth}
   \includegraphics[width=0.99\textwidth]{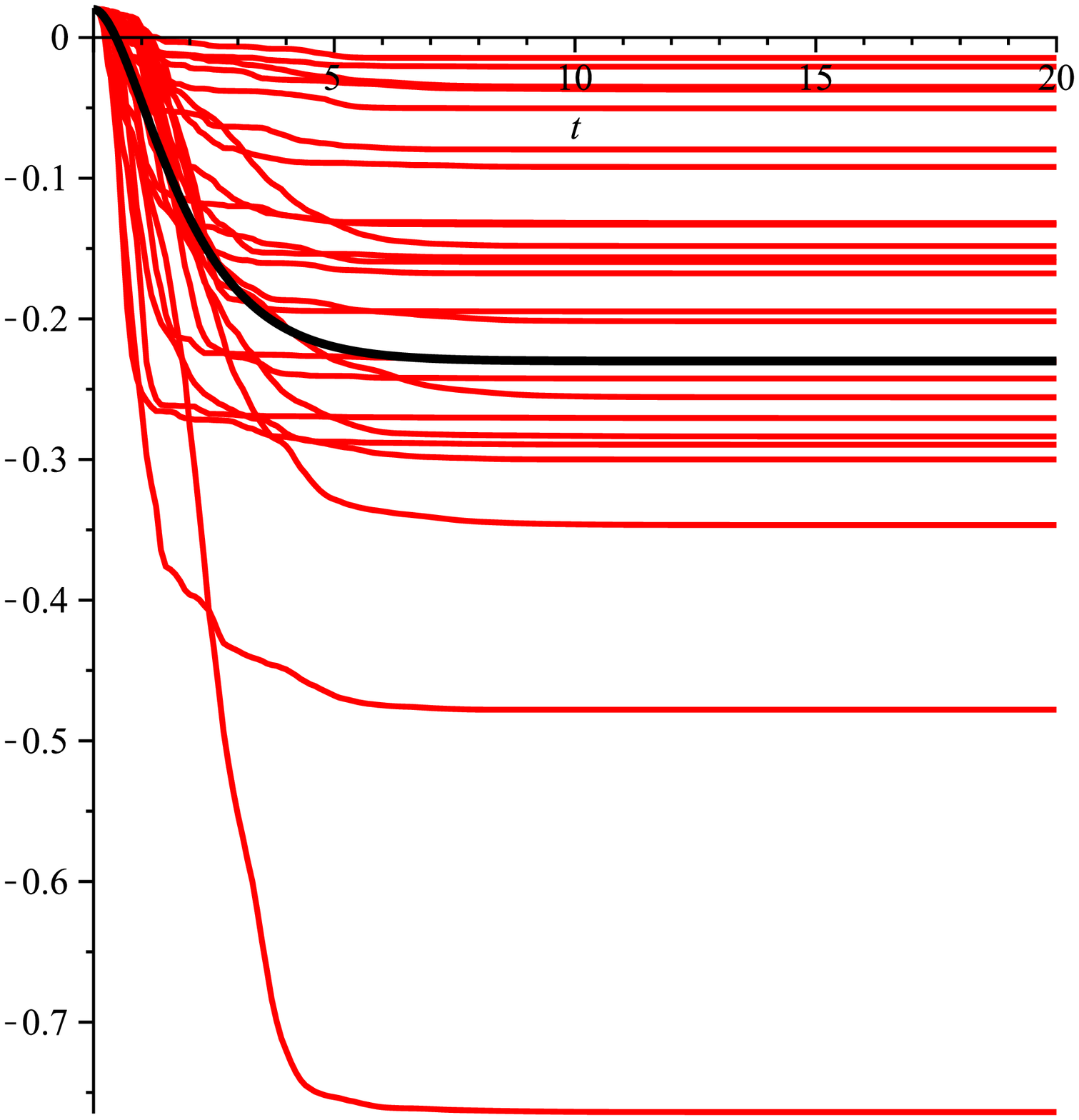}
\subcaption{$\sigma=0.1$.}
 \end{subfigure}
 \caption{Expected value (black) and 25 realization (red) of process $R_t$ defined by \eqref{eq:kab0} with $c(t)=-\mathrm{e}^{-t}, ~k=0, ~A=0.02, ~B=0, ~m=2, ~a(t)=b(t)\equiv 0$.}
\end{figure}

\begin{figure}[h!]
\centering
\begin{subfigure}[b]{.3\linewidth}
   \includegraphics[width=0.99\textwidth]{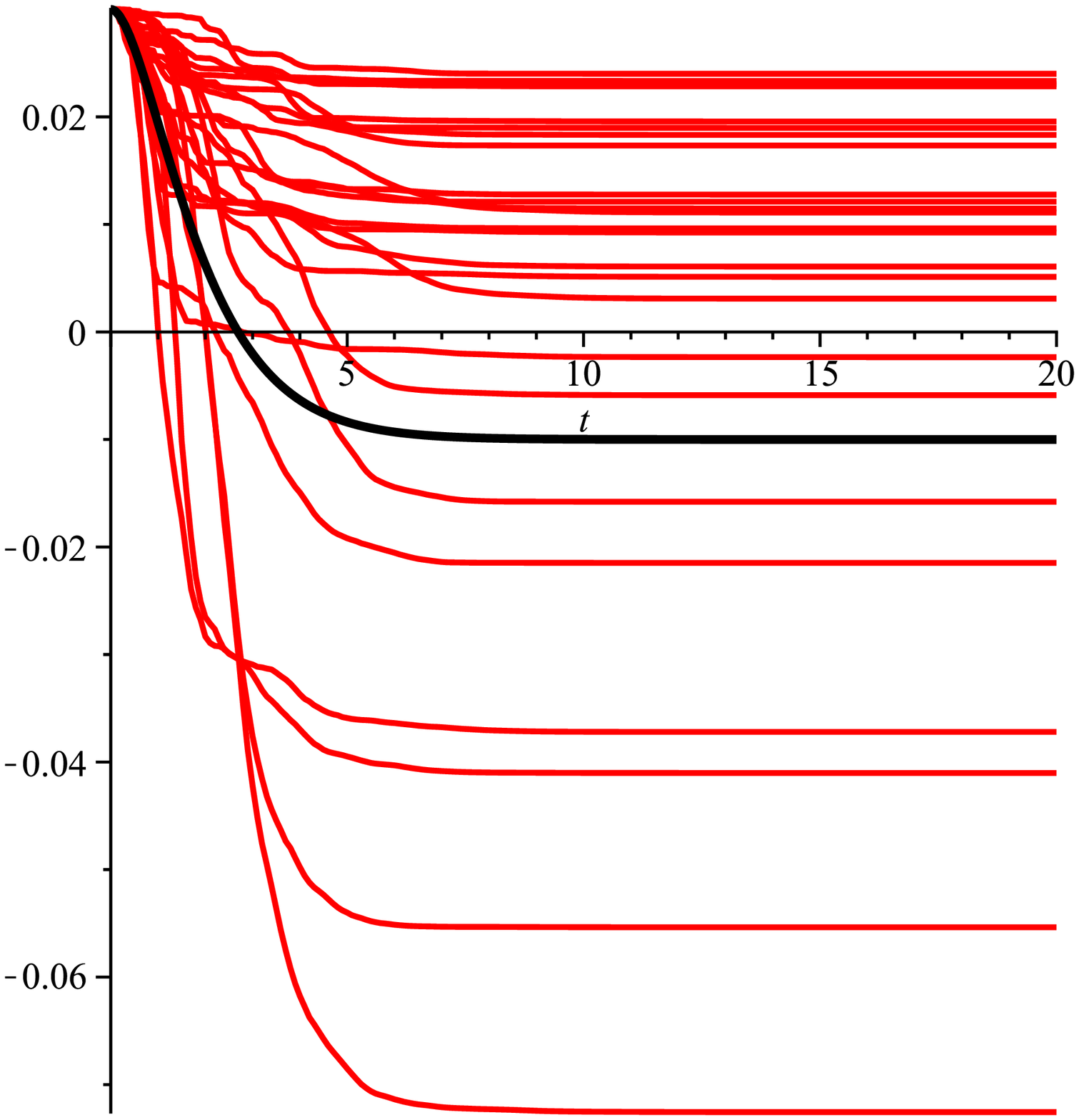}

\caption{$m=2$.}
 \end{subfigure}
\begin{subfigure}[b]{.3\linewidth}
   \includegraphics[width=0.99\textwidth]{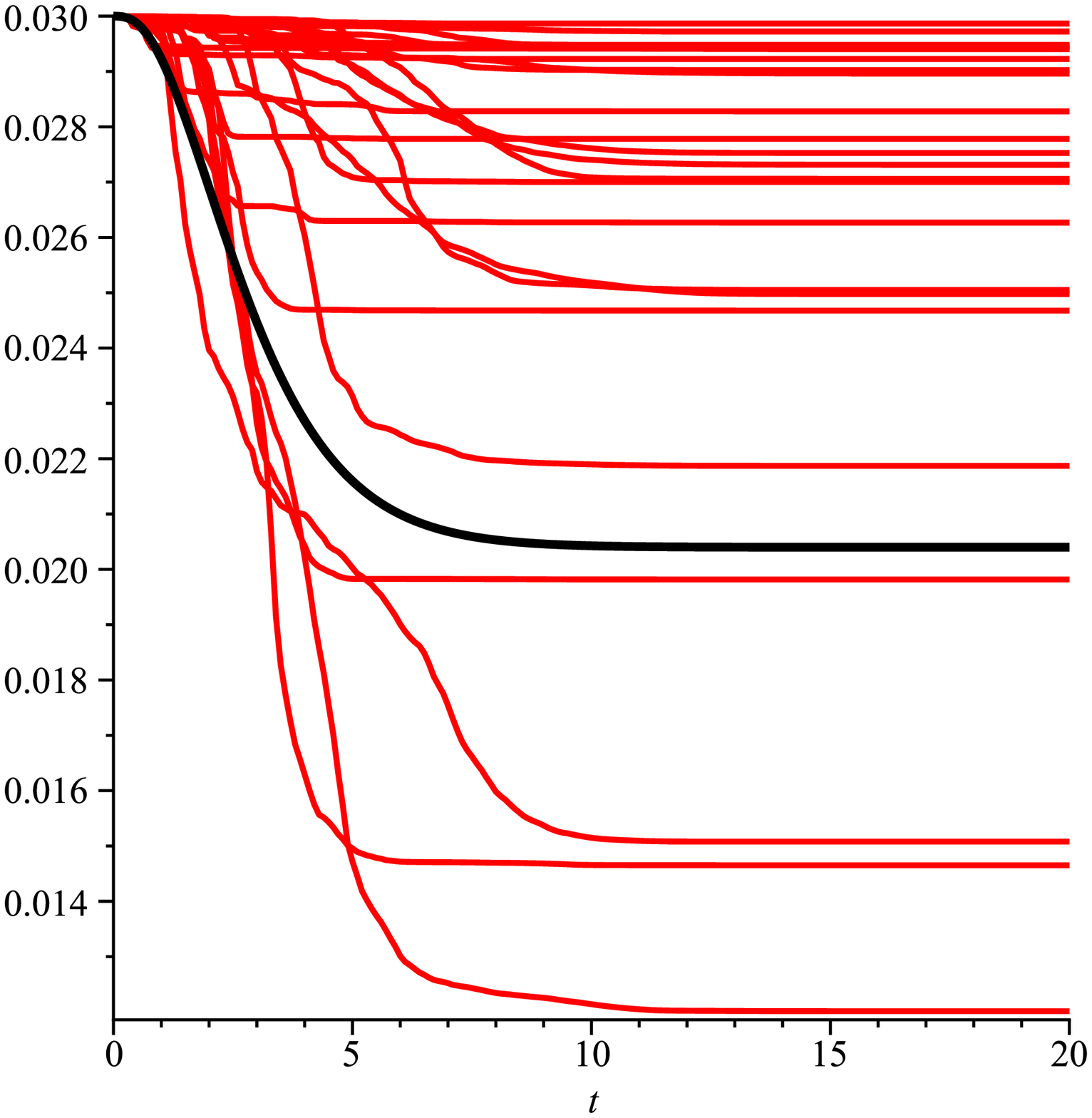}
\caption{$m=4$.}
 \end{subfigure}
 \begin{subfigure}[b]{.3\linewidth}
   \includegraphics[width=0.99\textwidth]{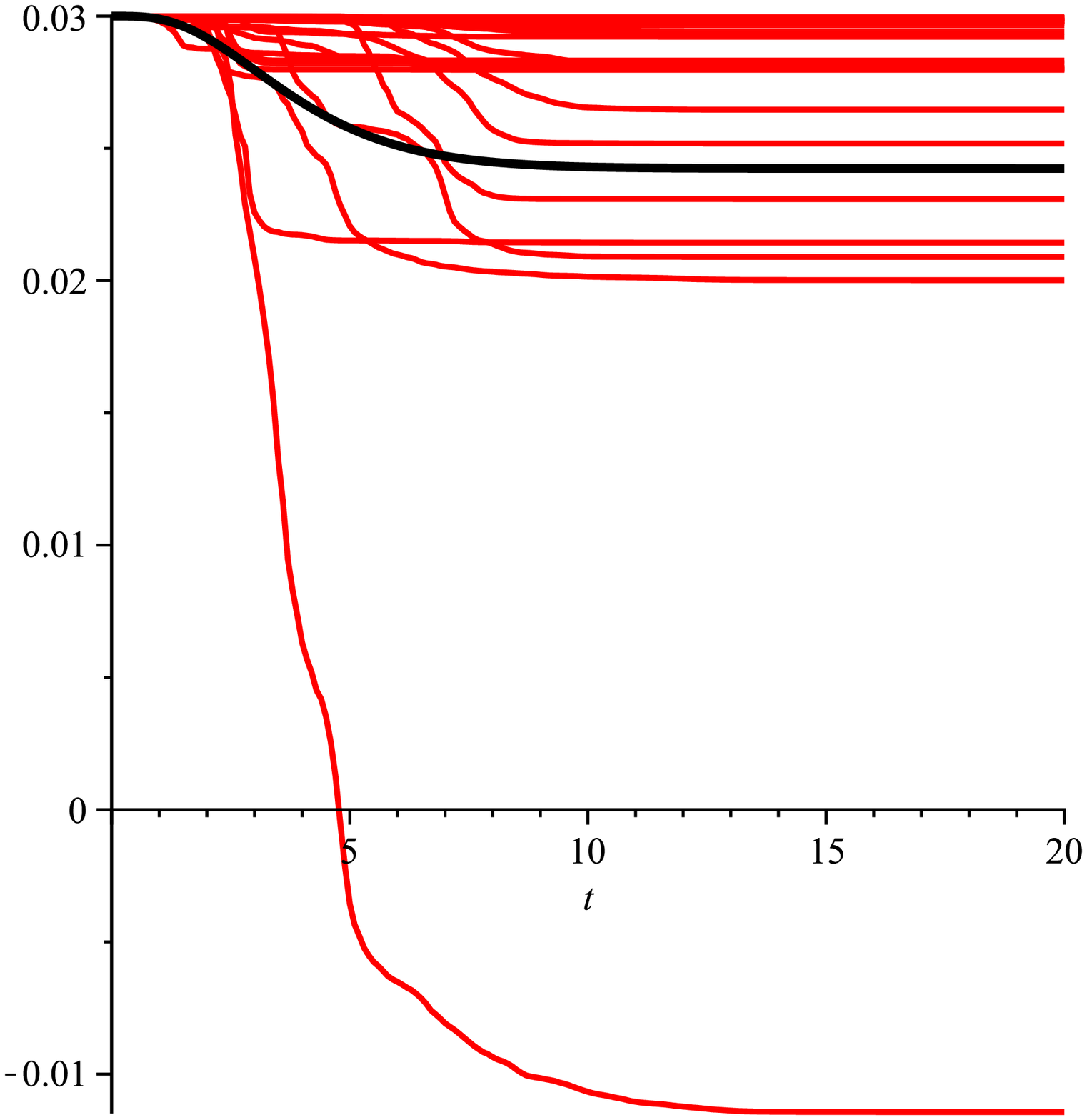}
\caption{$m=6$.}
 \end{subfigure}
 \caption{Expected value (black) and 25 realization (red) of process $R_t$ defined by \eqref{eq:kab0} with $c(t)=-\mathrm{e}^{-t}, ~k=0, ~A=0.03, ~B=0, ~a(t)=b(t)\equiv 0, ~\sigma=0.2$.}
\end{figure}

\subsection{Case study II}
 Case:  $k=n=0, ~l=2, ~\sigma \left( t \right) =\sigma, ~b \left( t \right) =-\frac{\sigma}{2}-a
 \left( t \right)  \left( W_t -\frac{t}{2} \right) ^{2}
\sigma^2.$ Then equation \eqref{eq:stochsys} reduces to
 \begin{eqnarray}\label{eq:kn0l1}
 {\mathrm d}r_t&=&c(t)\,(p_t)^m\,{\mathrm d}t \nonumber\\
{\mathrm d}p_t&=&\left[a(t)\,(p_t)^2-a(t)\left(W_t-\frac{t}{2}\right)^2\sigma^2-\frac{\sigma}{2}\right]{\mathrm d}t+\sigma\,{\mathrm d}W_t,
\end{eqnarray}
with the corresponding system
\begin{eqnarray}\label{eq:II}
 {\mathrm d}r_t&=&\sigma^m\,c(t)\,(u_t)^m\,{\mathrm d}t\nonumber\\
 {\mathrm d}z_t&=&\sigma\,a(t)\,z_t\,(2u_t-z_t)\,{\mathrm d}t.
\end{eqnarray}
In this case Theorem \ref{t:tran} is very useful.
The second equation in \eqref{eq:II} is Bernoulli differential equation with explicit general solution
$$\displaystyle z_t ={\frac {{{\rm e}^{\displaystyle\sigma\int _{0}^{t}\!a \left( u
 \right)  \left( 2\,W_u -u \right) {du}}}}{-
\sigma\,\displaystyle\int_{0}^{t}\!{{\rm e}^{\displaystyle\sigma\int _{0}^{u}\!a \left( s \right)
 \left( 2\,W_s -s \right) {ds}}}a \left( u
 \right) {du}+K}}, ~~K\in\R.
$$
Now by substituting it into the first equation, we can directly obtain a general solution
for $r_t$ by quadrature.
Denote a process $\tilde R_t$ as the solution of Cauchy problem with
$A>0, ~B\ne0$, then

\begin{align}\label{rt:fm}
\tilde R_t=A+\displaystyle\int _{0}^{t}\! \left(  \frac{ {{\rm e}^{\displaystyle\sigma\,\int _{0}^{u}\!a \left( v
 \right)  \left( 2\,W_v -v \right)\,{dv}}}}{\displaystyle -
\sigma\,\int _{0}^{u}\!{{\rm e}^{\sigma\,\displaystyle\int _{0}^{v}\!a \left( s
 \right)  \left( 2\,W_s -s \right)\,{ds}}}a \left( v
 \right) {dv}+\sigma\, \left( {\frac {c \left( 0 \right) }{B}}
 \right) ^\frac{1}{m} } +W_u -\frac{u}{2} \right)
^{m}{\sigma}^{m}c \left( u \right) {du}.
\end{align}

Exact formulation of the process $\tilde R_t$ \eqref{rt:fm} is quite difficult and thus hard to study. Nevertheless, we can use Fr\'echet derivative (functional derivative)
in order to obtain a linearized approximation. Denote as
$$F(W_t):=\frac{ {{\rm e}^{\displaystyle\sigma\,\int _{0}^{u}\!a \left( v
 \right)  \left( 2\,W_v -v \right)\,{dv}}}}{\displaystyle -
\sigma\,\int _{0}^{u}\!{{\rm e}^{\sigma\,\displaystyle\int _{0}^{v}\!a \left( s
 \right)  \left( 2\,W_s -s \right)\,{ds}}}a \left( v
 \right) {dv}+\sigma\, \left( {\frac {c \left( 0 \right) }{B}}
 \right) ^\frac{1}{m} }.$$
 Then $F(W_t)\approx F(W_0)+DF(W_0)W_t,$
 where $$F(W_0)={{\rm e}^{-\sigma\,\displaystyle\int _{0}^{u}\!a \left( v \right) v{dv}}} \left( -
\sigma\,\displaystyle\int _{0}^{u}\!{{\rm e}^{-\sigma\,\displaystyle\int _{0}^{v}\!a \left( s
 \right) s{ds}}}a \left( v \right) {dv}+\sigma\, \left( {\frac {c
 \left( 0 \right) }{B}} \right) ^{\frac{1}{m}} \right)^{-1}
$$
and
$$DF(W_0)=a \left( 0 \right) {{\rm e}^{-\sigma\,\displaystyle\int _{0}^{u}\!a \left( v
 \right) v{dv}}} \left( {\frac {c \left( 0 \right) }{B}} \right) ^{\frac{1}{m}} \left(\displaystyle \int _{0}^{u}\!{{\rm e}^{-\sigma\,\displaystyle\int _{0}^{v}\!a
 \left( s \right) s{ds}}}a \left( v \right) {dv}- \left( {\frac {c
 \left( 0 \right) }{B}} \right) ^{\frac{1}{m}} \right) ^{-2}.
$$
This approach yields the approximation of the process $\tilde R_t$:
$$\tilde R_t\approx A+\displaystyle\int _{0}^{t}\! \left( F(W_0)(u)+(DF(W_0)(u)+1)\,W_u -\frac{u}{2} \right)
^{m}{\sigma}^{m}c \left( u \right) {du}.
$$
Using similar techniques as for study case I, one can obtain an approximation of expectation and variance of given process.
From computational point of view result \eqref{rt:fm} can be hardly used, e.g.~in {\rm The Finance package} in Maple,
one can not plot paths of doubly integrated Wiener process. Thus one should use approximation of such process involving linearization of $F(w_t)$.
Notice here that for specific values of this model has increase of parameter $m$
opposite effect as for the model \eqref{eq:kab0}, see Figure 8 (a).

\begin{figure}[h!]\label{fig:b}
\centering
\begin{subfigure}[b]{.3\linewidth}
   \includegraphics[width=0.99\textwidth]{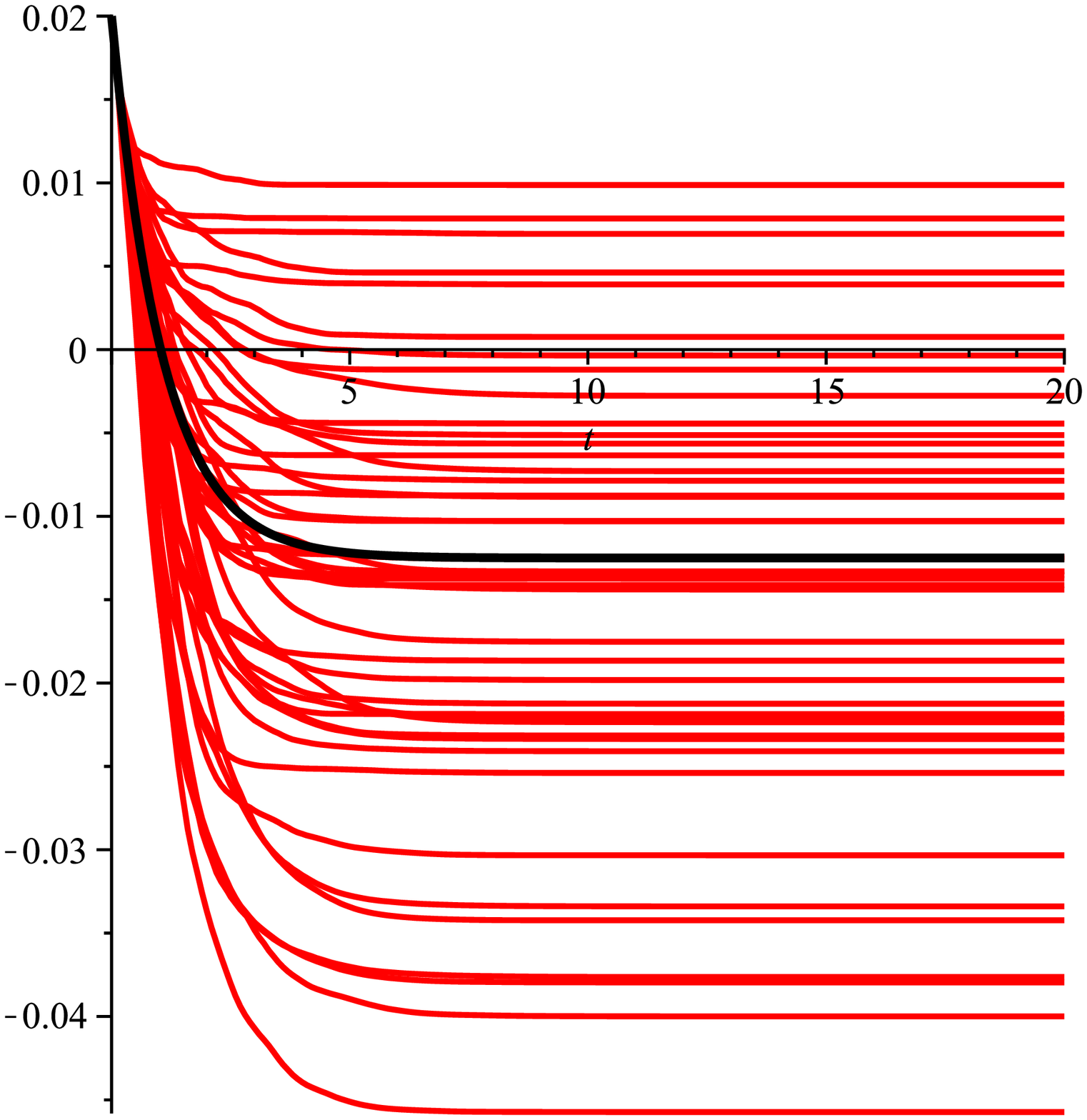}

\caption{$B=0.03$.}
 \end{subfigure}
\begin{subfigure}[b]{.3\linewidth}
   \includegraphics[width=0.99\textwidth]{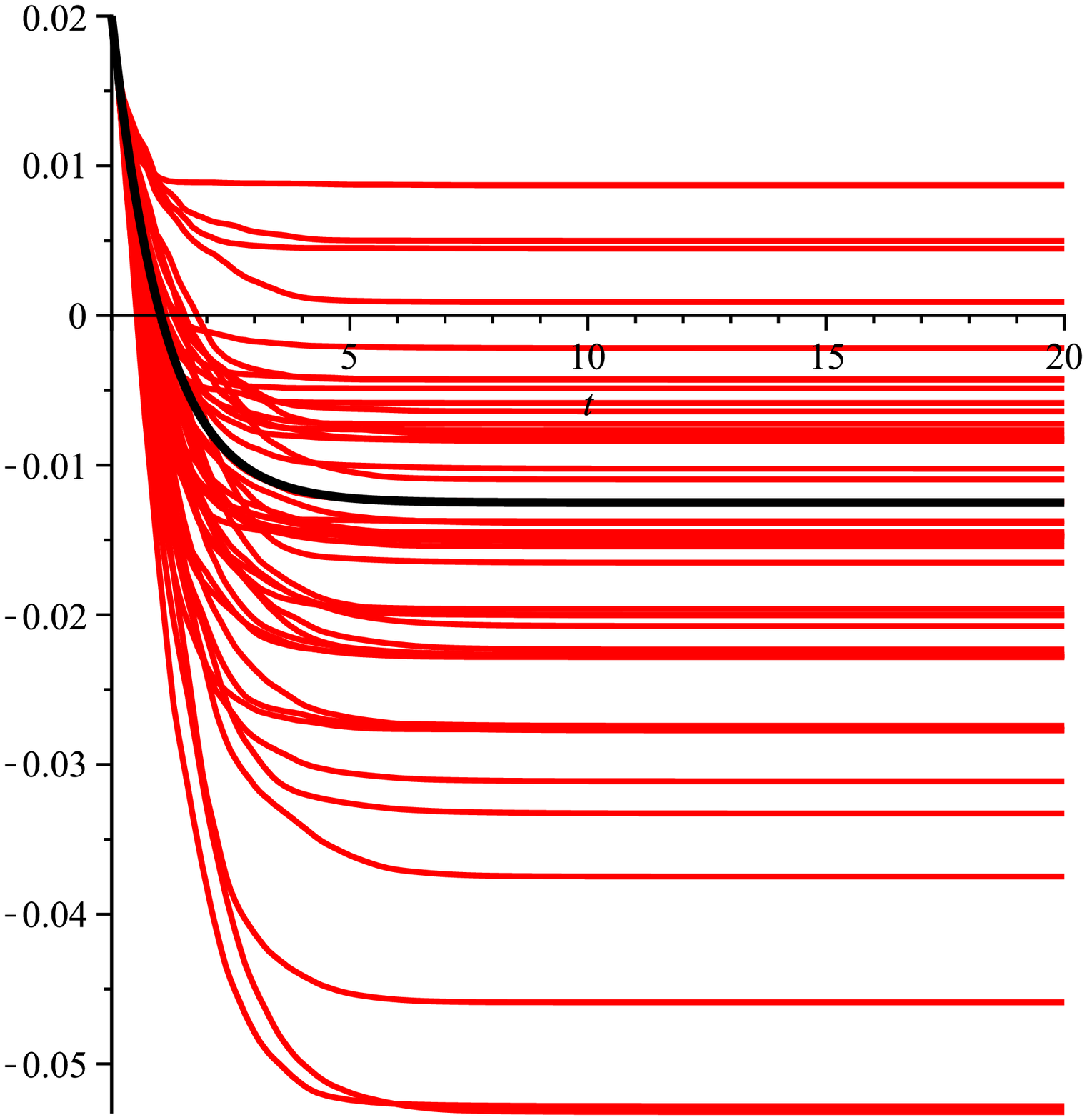}
\caption{$B=-0.03$.}
 \end{subfigure}
 \begin{subfigure}[b]{.3\linewidth}
   \includegraphics[width=0.99\textwidth]{caseexpt}
\caption{$B=0$.}
 \end{subfigure}
 \caption{Expected value (black) and 25 realization (red) of process
 \eqref{eq:kab0} with $c(t)=-\mathrm{e}^{-t}, ~k=0, ~A=0.02, ~m=2, ~a(t)=b(t)\equiv 0, ~\sigma=0.05$.}
\end{figure}

\begin{figure}[h!]\label{fig:m}
\centering
\begin{subfigure}[b]{.45\linewidth}
   \includegraphics[width=0.99\textwidth]{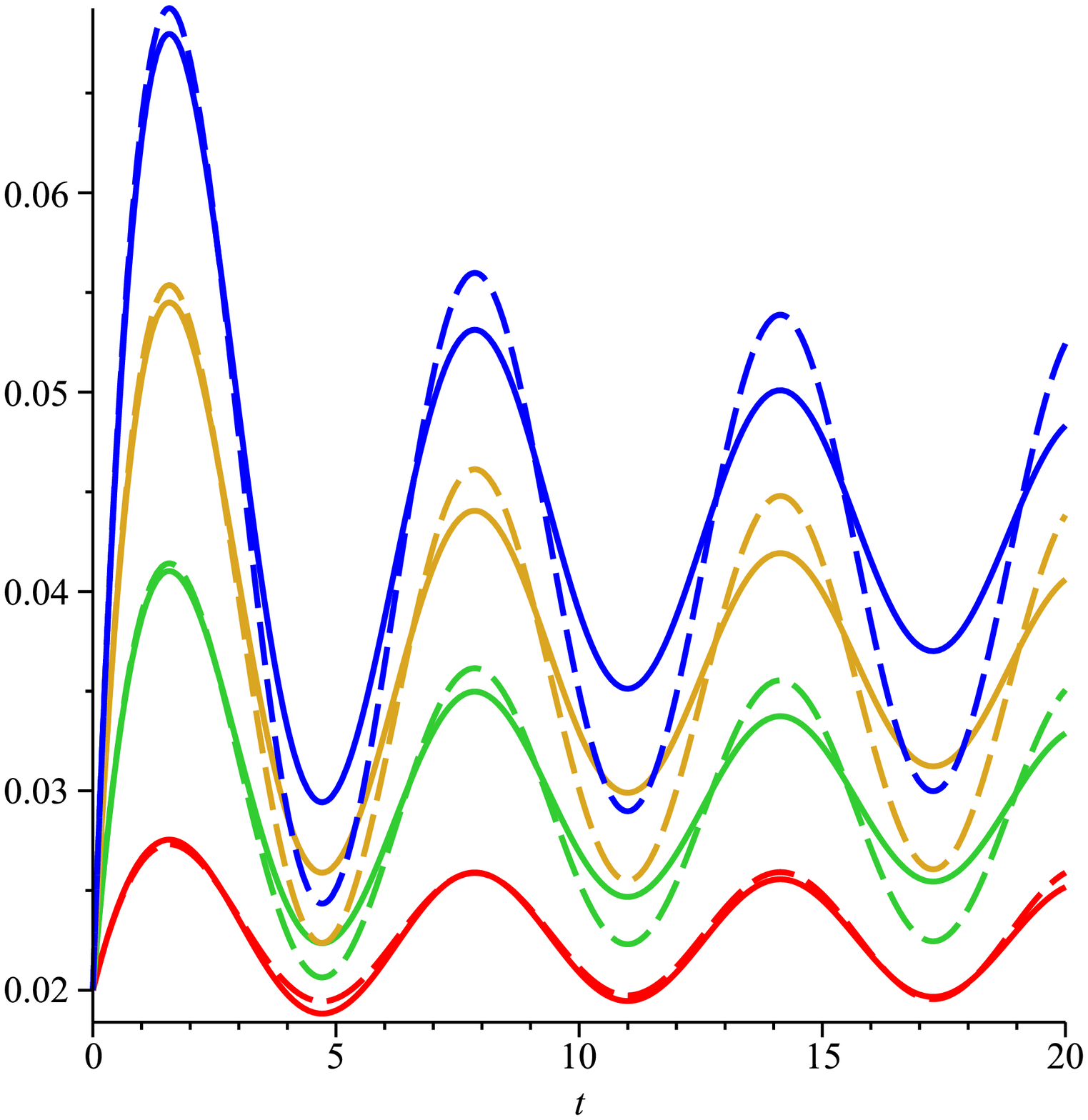}
\caption{$B\in\{0.01, 0.03, 0.05, 0.07\}, ~~m\in\{2,6\},$\\$c(t)=\frac{\cos{t}}{t+1}$.}
 \end{subfigure}
\begin{subfigure}[b]{.45\linewidth}
   \includegraphics[width=0.99\textwidth]{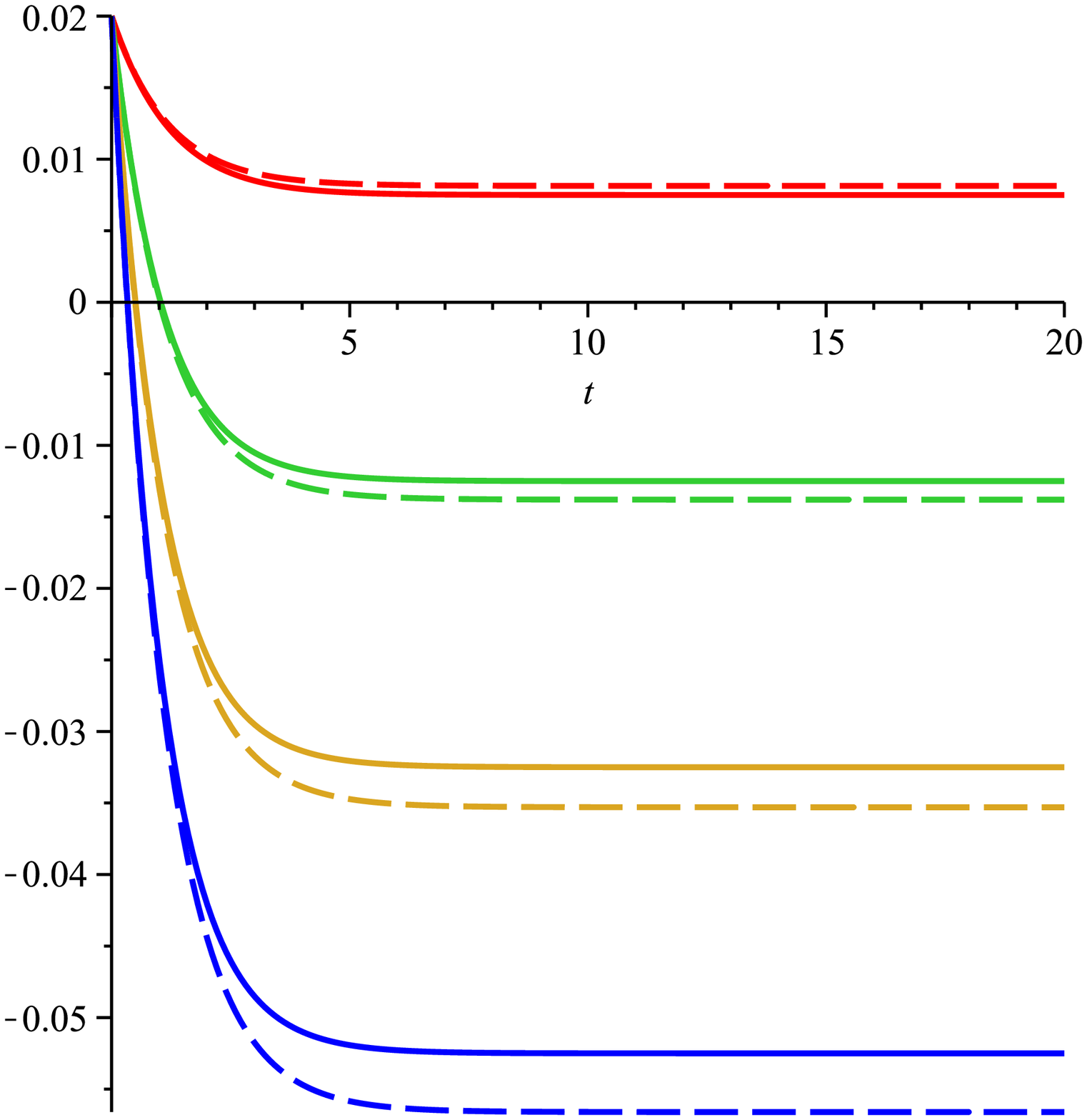}
\caption{$B\in\{0.01, 0.03, 0.05, 0.07\}, ~~m\in\{2,6\},$ \\ $c(t)=-\mathrm{e}^{-t}$.}
 \end{subfigure}
 \caption{Influence of parameters $B, ~m$ ~on the trajectory of the expected value (see Lemma \ref{lm:e}) of process $R_t$ defined by \eqref{eq:kab0}
 with $k=0, ~A=0.02, ~a(t)=b(t)\equiv 0, ~\sigma=0.05$ ($m=6$ - dashed line).}
\end{figure}

\begin{Ex}
 For $m=2$ we have
 $$E[\tilde R_t]\approx A+\int_0^t\left[\left(F(W_0)(u)-\frac{u}{2}\right)^2+(DF(W_0)(u)+1)^2\,u\right] {\sigma}^{m}c \left( u \right) {du}$$
 and
 $$Var[\tilde R_t]\approx \sigma^2\int_0^t\int_0^tc(u)\,c(s)\,\left[
 \left(F(W_0)(u)-\frac{u}{2}\right)^2\left(F(W_0)(s)-\frac{s}{2}\right)^2
+\right.$$
$$+\left(F(W_0)(u)-\frac{u}{2}\right)^2(DF(W_0)(u)+1)^2\,s
 +\left(F(W_0)(s)-\frac{s}{2}\right)^2(DF(W_0)(s)+1)^2\,u+$$
 $$+
 4\left(F(W_0)(u)-\frac{u}{2}\right)\left(F(W_0)(s)-\frac{s}{2}\right)(DF(W_0)(s)+1)(DF(W_0)(u)+1)\,\mathcal{E}_1(u,s)+
 $$
 $$+\left.\vphantom{\left(F(W_0)(s)-\frac{s}{2}\right)^2}(DF(W_0)(s)+1)^2(DF(W_0)(u)+1)^2\,\mathcal{E}_2(u,s)+\right]\,\mathrm{d}u\mathrm{d}s-$$
 $$\left(\int_0^t\left[\left(F(W_0)(u)-\frac{u}{2}\right)^2+(DF(W_0)(u)+1)^2\,u\right] {\sigma}^{m}c \left( u \right) {du}\right)^2.$$

\end{Ex}

\begin{figure}[h!]
\centering
\begin{subfigure}[b]{.45\linewidth}
   \includegraphics[width=0.99\textwidth]{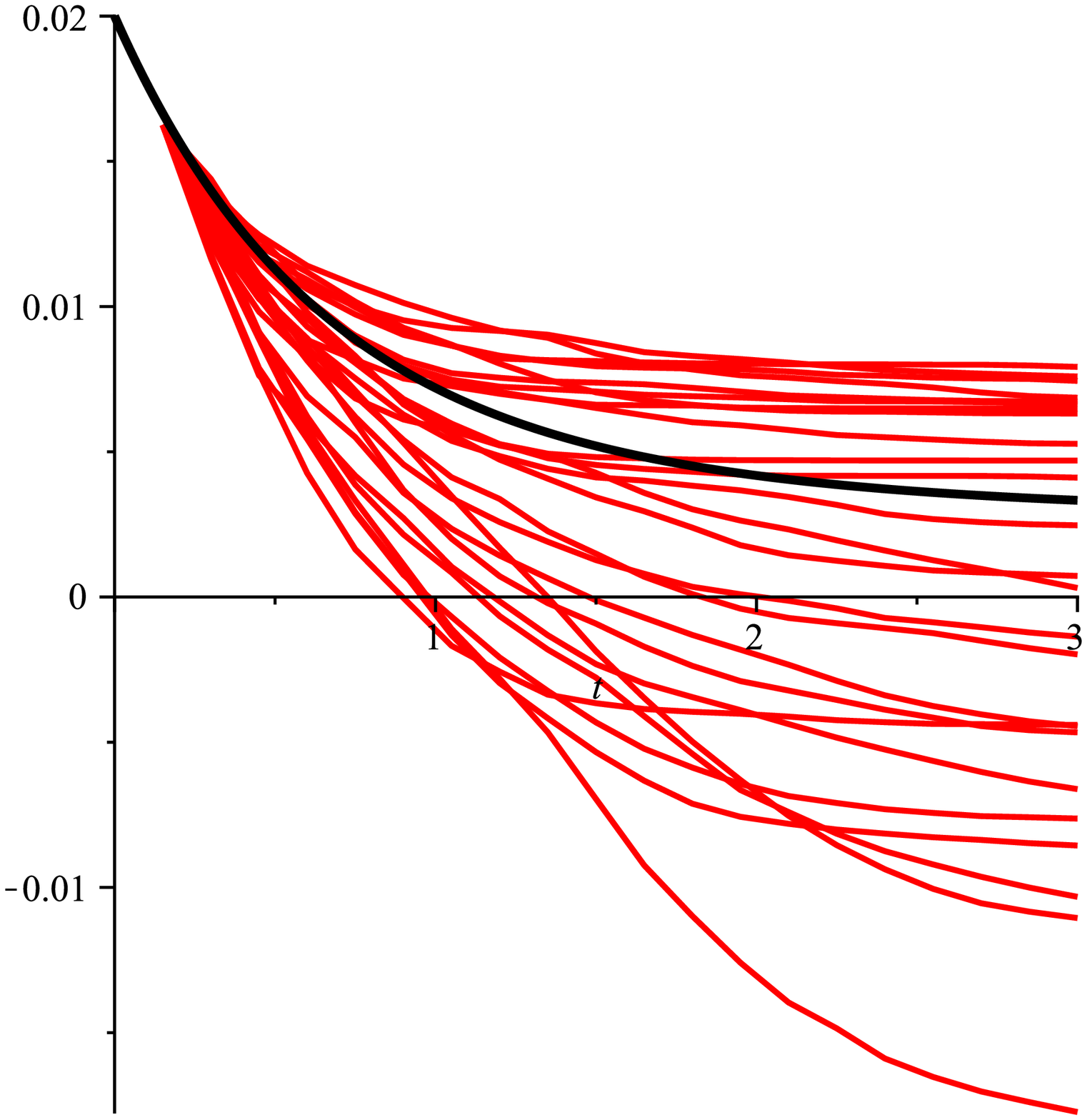}
\subcaption{$\sigma=0.05$.}
 \end{subfigure}
\begin{subfigure}[b]{.45\linewidth}
   \includegraphics[width=0.99\textwidth]{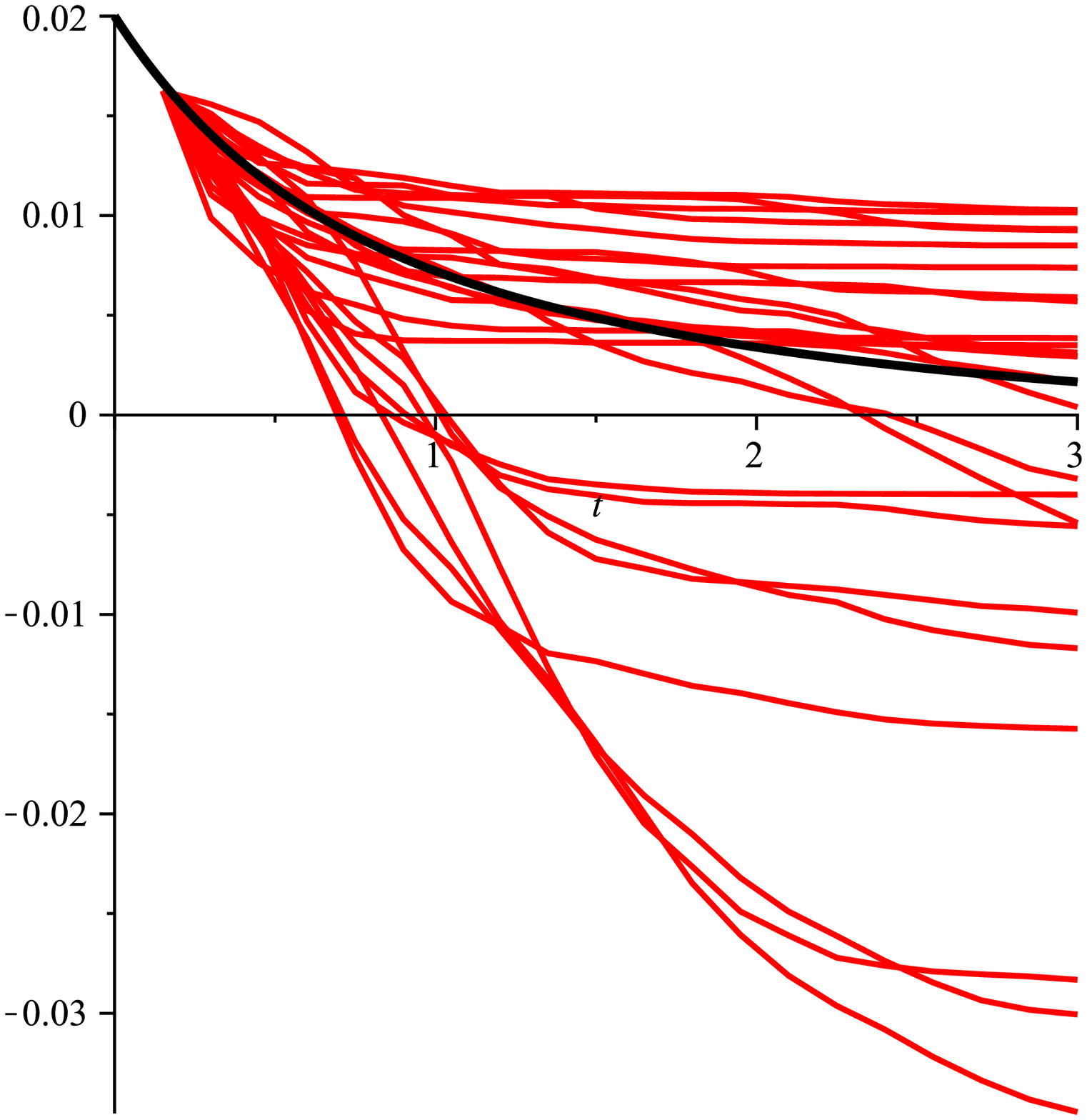}
\subcaption{$\sigma=0.1$.}
 \end{subfigure}
 \caption{Expected value (black) and 25 realization (red) of
 process \eqref{eq:kn0l1} (with linearized approximation $F(W_t)$)
  with $c(t)=-\mathrm{e}^{-t}, ~k=n=0, ~l=2, ~A=0.02, ~B=-0.025, ~m=2, ~a(t)\equiv -1$.}
\end{figure}

\begin{figure}[h!]
\centering
\begin{subfigure}[b]{.45\linewidth}
   \includegraphics[width=0.99\textwidth]{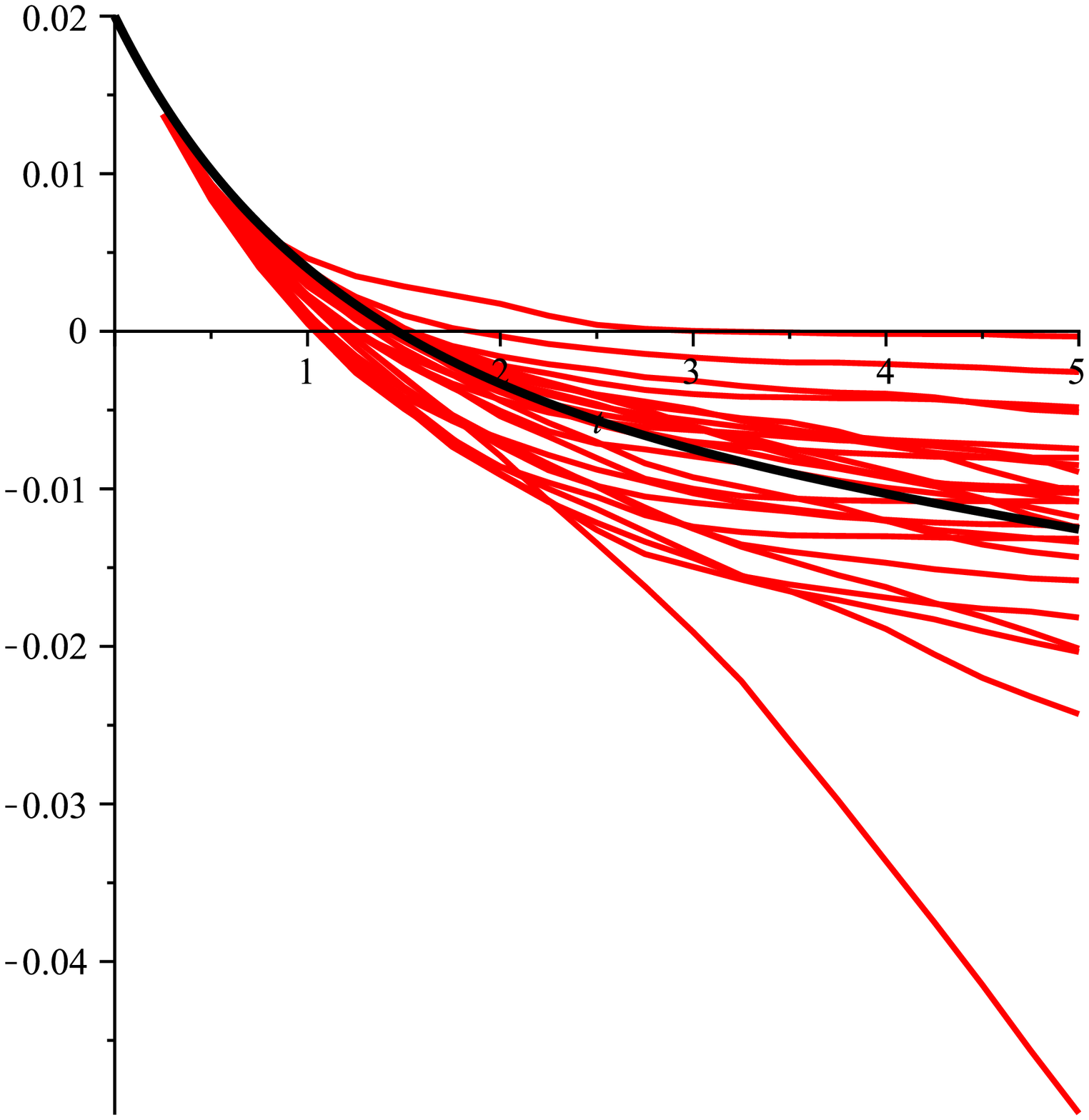}
\subcaption{$\sigma=0.05$.}
 \end{subfigure}
\begin{subfigure}[b]{.45\linewidth}
   \includegraphics[width=0.99\textwidth]{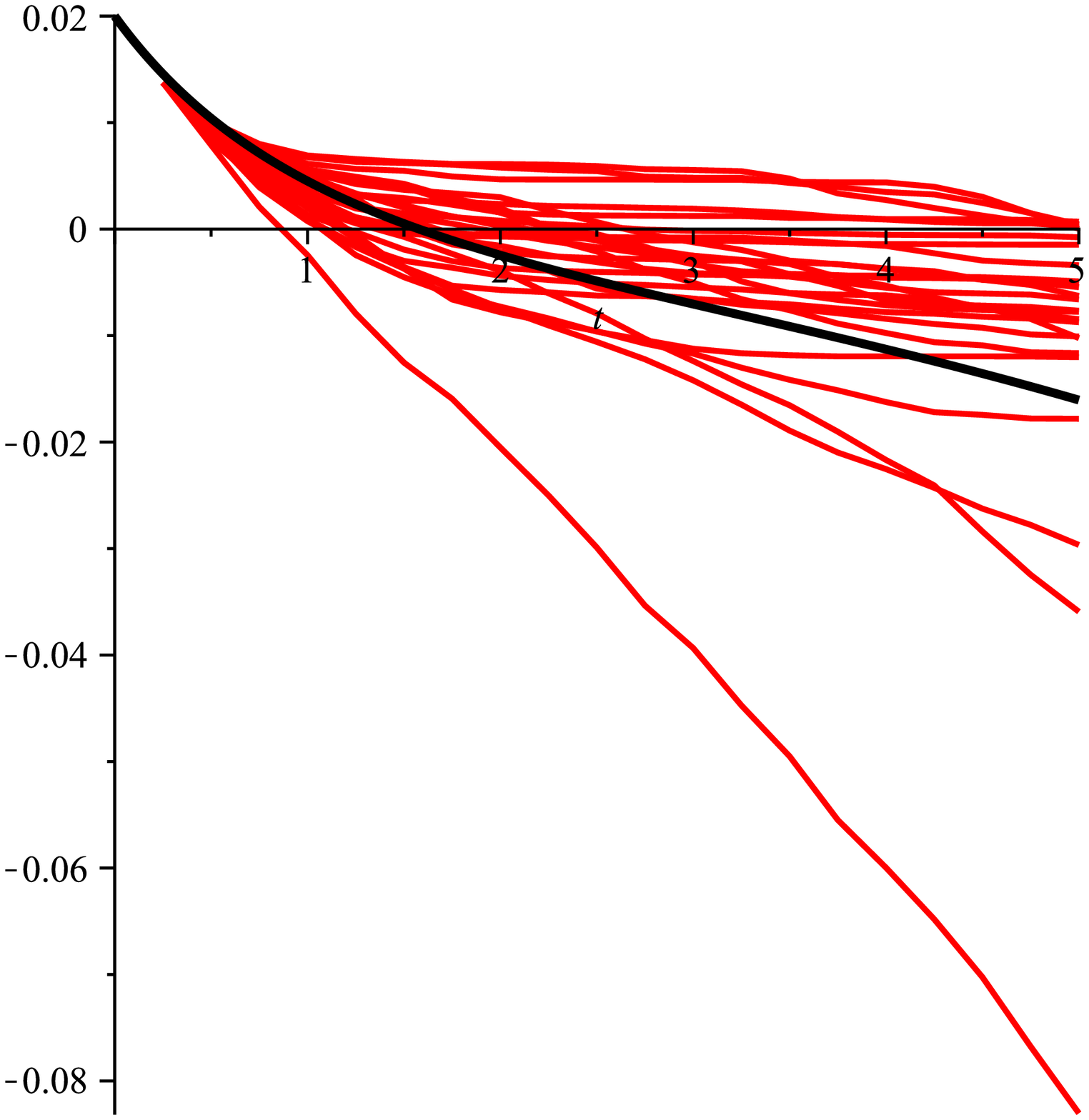}
\subcaption{$\sigma=0.1$.}
 \end{subfigure}
 \caption{Expected value (black) and 25 realization (red) of
 process \eqref{eq:kn0l1} (with linearized approximation $F(W_t)$)
  with $c(t)\equiv -0.1, ~k=n=0, ~l=2, ~A=0.02, ~B=-0.025, ~m=2, ~a(t)\equiv -1$.}
\end{figure}

\begin{figure}[h!]\label{fig:FWm}
\centering
\begin{subfigure}[b]{.3\linewidth}
   \includegraphics[width=0.99\textwidth]{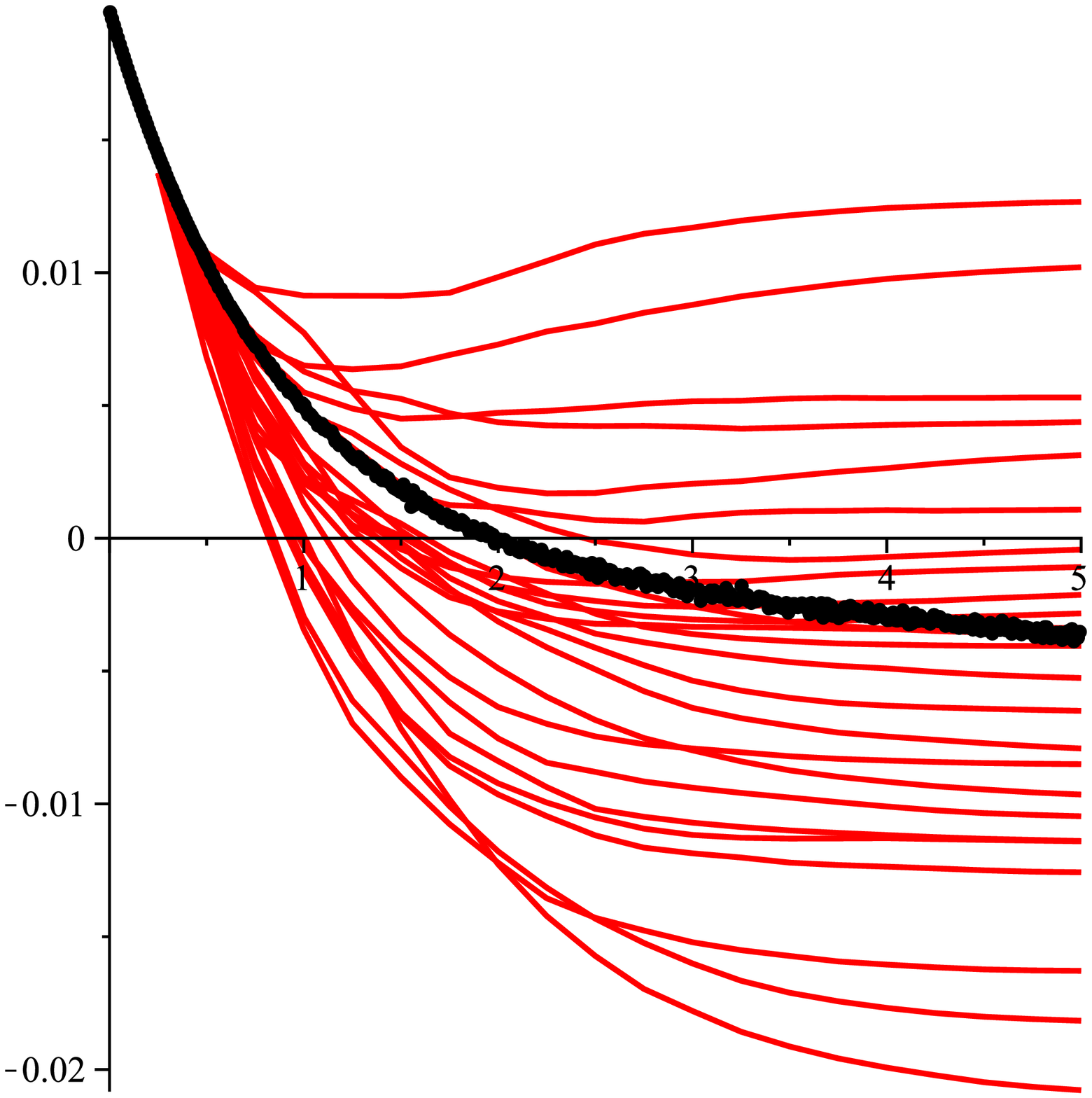}

\caption{$m=1$.}
 \end{subfigure}
\begin{subfigure}[b]{.3\linewidth}
   \includegraphics[width=0.99\textwidth]{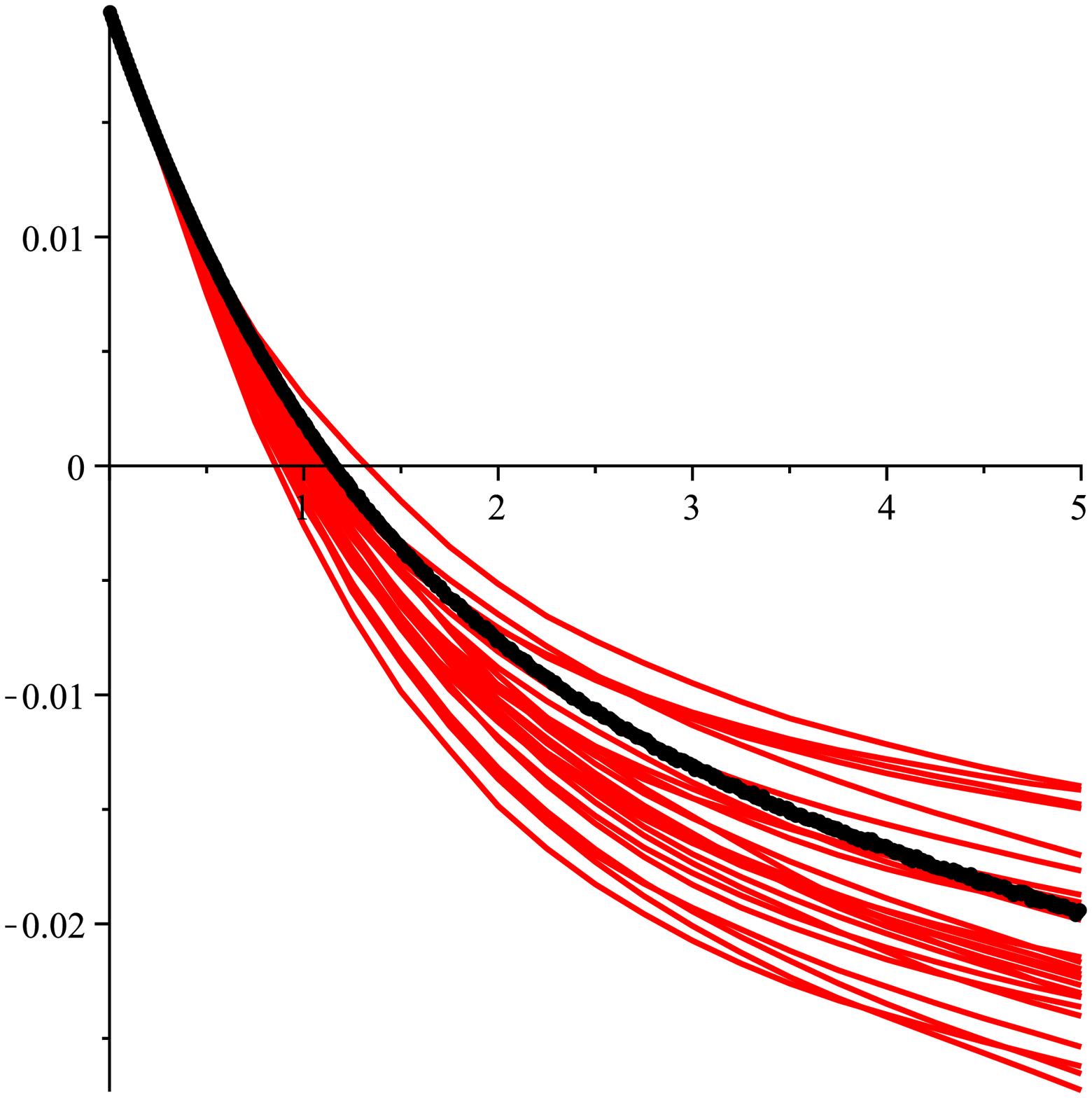}
\caption{$m=2$.}
 \end{subfigure}
 \begin{subfigure}[b]{.3\linewidth}
   \includegraphics[width=0.99\textwidth]{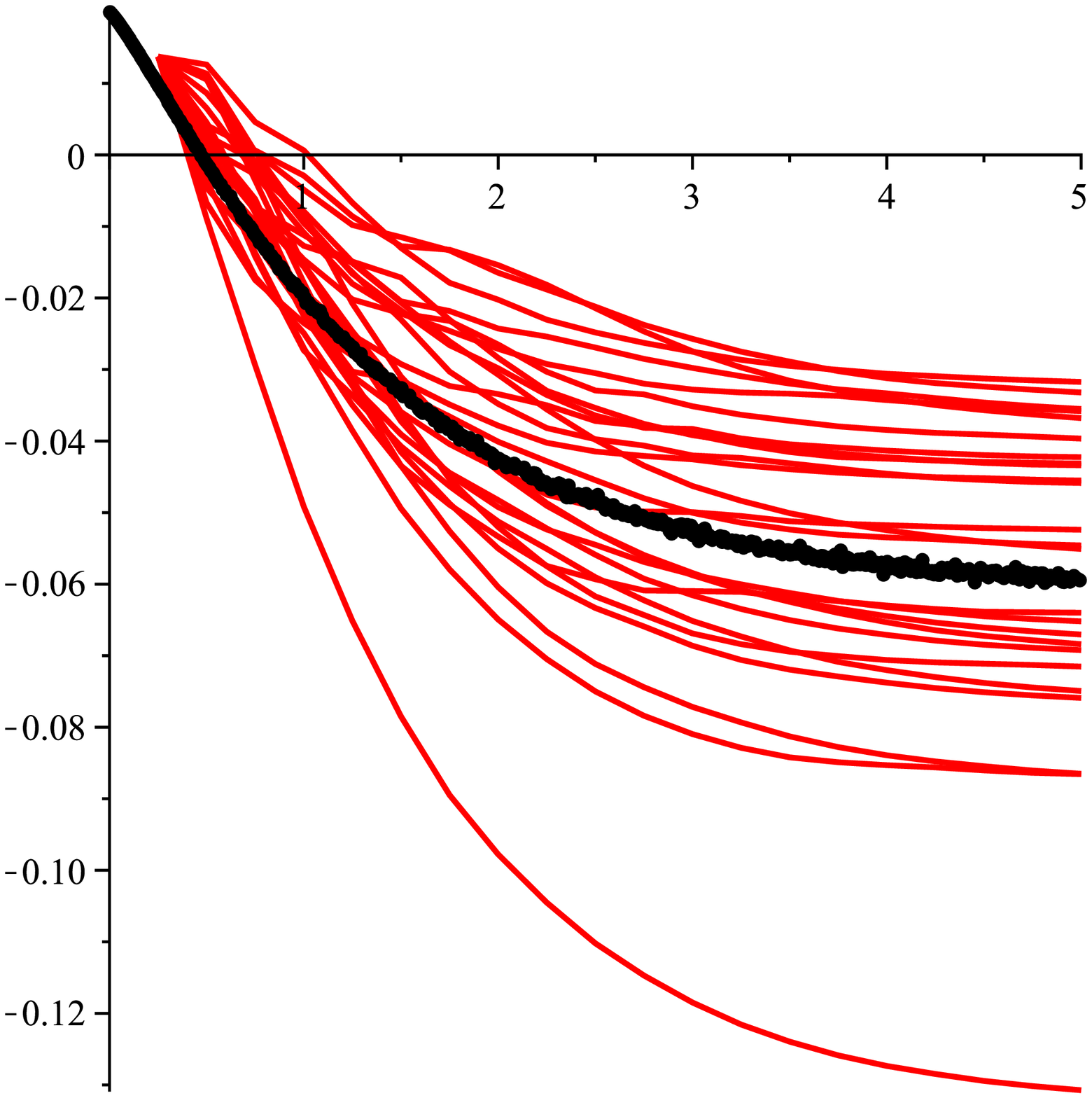}
\caption{$m=\frac{1}{2}$.}
 \end{subfigure}
 \caption{Expected value (black) and 25 realization (red) of
 process \eqref{eq:kn0l1} (with linearized approximation $F(W_t)$)
  with
  $c(t)=-\mathrm{e}^{-t}, ~k=0, ~A=0.02, ~B=-0.025, ~a(t)=b(t)\equiv 1, ~\sigma=0.01$.}
\end{figure}

\section{Remarks and interpretations}

\begin{rem}
In Case study I, $B=0$, we consider constant volatility $\sigma(t)=\sigma>0$ and
an interesting fact is that for
\emph{dumping scale function} $c(u)>0$ ~(e.g.~$c(u)=\exp(-u),u>0$)  we have
an estimation of $Var[R_t]$ from below by Gamma kernel:
$$Var[R_t]\geq \sigma^{2m}\left[F(t)\displaystyle\frac{(m!)^2}{2^{m}}\sum_{j=0}^\frac{m}{2}\frac{2^{2j}}{(2j)!\left[\left(\frac{m}{2}-j\right)!\right]^2}-\frac{m!}{2^\frac{m}{2}\left(\frac{m}{2}\right)!}\int_0^tc(u)\,u^\frac{m}{2}\mathrm{d}u\right]\geq$$
$${\sigma}^{2m}  m!\, \left( {\frac {  m!\,{2}^{m}\,\Gamma  \left( \frac{1}{2}+m \right) F \left( t
 \right) }{\sqrt {\pi }\,\Gamma  \left( m+1 \right)
^{2}}}-{\frac {{\displaystyle{t}^{\frac{m}{2}+1}\,\max_{u\in[0,t]}\,c(u)}}{ \left( \frac{m}{2}+1 \right)!\,{2}^{\frac{m}{2}}}} \right)
$$
for $m$ even, $F(t)=t^2$. Thus for a finite time $t>0$ the general form of limit is $\infty$. Similar argumentation
can be used for the odd case. We have infinite \emph{dumping power} $m$ for fixed time,
that with increasing power $m$
expectation of the modelled interest rate force can exceed negative values. On the other hand
small shift of volatility parameter $\sigma$ could have a reverse effect.
This phenomenon is caused, because we also get an infinite expected value of $E[R_t]$ for increasing $m$.
In contrast to that, for $m$ close to zero we have $E[R_t]$ close to $R_0+1=A+1$ for all $\sigma>0$ and $c(u)=\exp(-u),u>0.$\\
For example, for $B\ne 0$ and $c(t)\equiv c\not\equiv 0$ it can be shown that
$$E[R_t]=A+B\,t{c}^{1-m}\,
{{}_3F_1\left(\left[1,-\frac{m}{2},\frac{1}{2}-\frac{m}{2}\right];\,[2];\,\left[2\,{c}^{2}{\sigma}^{2}t{B}^{-\frac{2}{m}}\right]\right)},
$$
where ${}_3F_1$ is a generalized hypergeometric function.
\end{rem}

\paragraph*{Acknowledgement.}
First author was partially supported by grant VEGA M\v S SR 1/0344/14.
Second author gratefully acknowledges support of ANR project Desire FWF I 833-N18.
Corresponding author acknowledges Fondecyt Proyecto Regular No 1151441. Last but not the least, the authors are very grateful to the
editor and the reviewers for their valuable comments.

\clearpage

\bibliographystyle{apalike} % plainnat     agsm
 \bibliography{LMJ_Refs}

\begin{thebibliography}{}

\bibitem[Ahearne et~al., 2002]{Ahearne}
Ahearne, A., Gagnon, J., Haltmaier, J., Kamin, S., Erceg, C., Faust, J.,
  Guerrieri, L., Roush, J., Rogers, J., and Sheets, N. (2002).
\newblock Preventing deflation: lessons from japan's experience in the 1990s.
\newblock {\em Board of Governors of the Federal Reserve System, International
  Finance Discussion Paper No. 729, June}.

\bibitem[Anderson and Liu, 2013]{AndersonLiu}
Anderson, R. and Liu, Y. (2013).
\newblock How low can you go? negative interest rates and investors' flight to
  safety.
\newblock {\em The Regional Economist}, January 2013.
\newblock \url{www.stlouisfed.org}.

\bibitem[Bryant, 2000]{Bryant}
Bryant, R. (2000).
\newblock Comment on: overcoming the zero bound on interest rate policy.
\newblock {\em Journal of Money, Credit and Banking}, 32:1036–50.
\newblock November.

\bibitem[Buiter and Panigirtzoglou, 2003]{Buiter}
Buiter, W. and Panigirtzoglou, N. (2003).
\newblock Overcoming the zero bound on nominal interest rates with negative
  interest on currency: Gesell's solution.
\newblock {\em The Economic Journal}, 113:723--746.

\bibitem[Cecchetti, 2009]{Cecchetti}
Cecchetti, S. (2009).
\newblock The case of the negative nominal interest rates: New estimates of the
  term structure ofinterest rates during the great depression.
\newblock {\em Journal of Political Economy}, 96:1111--1141.

\bibitem[Chan et~al., 1992]{Chan}
Chan, K., Karolyi, G., Longstaff, F., and Sanders, A. (1992).
\newblock An empirical comparison of alternative models of the short?term
  interest rate.
\newblock {\em The journal of finance}, 47:1209--1227.

\bibitem[Evans and Honkapohja, 2001]{Evans}
Evans, G. and Honkapohja, S. (2001).
\newblock {\em Learning and expectations in macroeconomics}.
\newblock Princeton University Press.

\bibitem[Fischer, 1996]{Fisher}
Fischer, S. (1996).
\newblock Why are central banks pursuing long-run price stability?
\newblock {\em Achieving price stability}, pages 7--34.

\bibitem[Freedman, 2000]{Freedman}
Freedman, C. (2000).
\newblock Comment on: overcoming the zero bound on interest rate policy.
\newblock {\em Journal of Money, Credit and Banking}, 32:1051–7.
\newblock November.

\bibitem[Gesell, 1949]{Gesell}
Gesell, S. (1949).
\newblock {\em Die nat{\"u}rliche Wirtschaftsordnung}.
\newblock available in Englisch as \textit{The Natural Econimic Order}, London:
  Peter Owen Ltd, 1958.

\bibitem[Goodfriend, 2000]{Goodfriend}
Goodfriend, M. (2000).
\newblock Overcoming the zero bound on interest rate policy.
\newblock {\em Journal of Money, Credit and Banking}, 32:1007--1035.

\bibitem[Isserlis, 1918]{Isserlis18}
Isserlis, L. (1918).
\newblock On a formula for the product-moment coefficient of any order of a
  normal frequency distribution in any number of variables.
\newblock 12(1/2):134--139.

\bibitem[Kendall and Stuart, 1977]{Ken77}
Kendall, M. and Stuart, A. (1977).
\newblock {\em The advanced theory of statistics}, volume 1: Distribution
  theory.
\newblock Macmillan, New York, NY, 4th edition.

\bibitem[Keynes, 1937]{Keynes}
Keynes, J. (1937).
\newblock The general theory of employment.
\newblock {\em The quarterly journal of economics}, pages 209--223.

\bibitem[Krugman, 1999]{Krugman}
Krugman, P. (1999).
\newblock {\em The Return of Depression Economics}.
\newblock W.W. Norton and Company, New York.

\bibitem[McCallum, 2000]{McCallum}
McCallum, B. (2000).
\newblock Theoretical analysis regarding a zero lower bound on nominal interest
  rates.
\newblock Technical report, Technical Report at the National bureau of economic
  research.

\bibitem[Odekon, 2015]{Odekon}
Odekon, M. (2015).
\newblock {\em Booms and Busts: An Encyclopedia of Economic History from the
  First Stock Market Crash of 1792 to the Current Global Economic Crisis}.
\newblock Taylor \& Francis, New York.

\bibitem[\O{}ksendal, 1998]{Oksendal}
\O{}ksendal, B. (1998).
\newblock {\em Stochastic Differential Equations, An Introduction with
  Applications}.
\newblock Springer, 5th edition.

\bibitem[Parker, 1995]{Parker}
Parker, G. (1995).
\newblock A second order stochastic differential equation for the force of
  interest.
\newblock {\em Insurance: Mathematics and Economics}, 16:211--224.

\bibitem[Stehl{\'\i}k et~al., 2015]{Stehlik15}
Stehl{\'\i}k, M., Potock\'y, R., l'ak, J.~K., and Jordanova, P. (2015).
\newblock On generalized interest rate dynamics.
\newblock {\em Appl.Math. Inf. Sci.}

\bibitem[Summers, 1991]{Summers91}
Summers, L. (1991).
\newblock Panel discussion: Price stability: How should long-term monetary
  policy be determined?
\newblock {\em Journal of Money, Credit and banking}, pages 625--631.

\bibitem[Vasicek, 1977]{Vasicek}
Vasicek, O. (1977).
\newblock An equilibrium characterization of the term structure.
\newblock {\em Journal of financial economics}, 5:177--188.

\end{thebibliography}

\vspace{1cm}
\noindent Milan Stehl\'\i k (Corresponding author) \\ Institute of Statistics  \\Universidad de Valpara\'iso, Valpara\'iso, Chile. \\
Department of Applied Statistics\\
Johannes Kepler University, Linz, Austria.\\
Email:  milan.stehlik@jku.at\\
Tel.+56 32 2654680, +4373224686808; Fax: +4373224686800\\

\noindent Philipp Hermann \\
Department of Applied Statistics\\
Johannes Kepler University, Linz, Austria.\\
Email: philipp.hermann@jku.at\\

\noindent Jozef Kise{\v l}\'ak\\
Institute of Mathematics,\\
P.J. \v Saf\'arik Univesity in Ko\v sice, Slovak Republic\\
Email: jozef.kiselak@upjs.sk

\appendix
\section{Appendices}
\subsection{First appendix}
We write $|Z|^2=\sum_{i,j}|Z_{ij}|^2$ for matrix $Z$.
\begin{thm}[\cite{Oksendal}, Theorem. 5.2.1.]\label{exi}
 Let $T>0$ and $b(\cdot,\cdot): [0,T]\times {\mathbb R}^n\to {\mathbb R}^n$,
$\Sigma(\cdot,\cdot): [0,T]\times {\mathbb R}^n\to {\mathbb R}^{n\times m}$ be
measurable functions satisfying
\begin{equation*}
 |b(t,x)|+|\Sigma(t,x)|\leq C(1+|x|);
\end{equation*}
$(x,t)\in{\mathbb R}^n\times[0,T]$ for some constant $C$, and such that
\begin{equation*}
 |b(t,x)-b(t,y)|+|\Sigma(t,x)-\Sigma(t,y)|\leq D|x-y|;
\end{equation*}
$x,y\in{\mathbb R}^n, ~~t\in[0,T]$ for some constant $D$. Let $Z$ be a random variable which is independent of the
$\sigma$-algebra ${\mathcal{F}}^{(m)}_\infty$ generated  by $W_s(\cdot), ~s\geq 0$ and such that
$$E\left[|Z|^2\right]<\infty.$$
Then the stochastic differential equation
\begin{eqnarray}
 {\mathrm d}X_t&=&b(t,X_t)\,{\mathrm d}t+\Sigma(t,X_t)\,{\mathrm d}W_t, \quad 0\leq t\leq T, \nonumber\\
X_0&=&Z \nonumber
\end{eqnarray}
has a unique $t$-continuous solution $X_t(\omega)$ with the property that $X_t(\omega)$
is adapted to the filtration ${\mathcal{F}}^Z_t$ generated by $Z$ and $W_s(\cdot); ~s\leq t$
and $$E\left[\int_0^T|X_t|^2{\mathrm d}t<\infty\right].$$
\end{thm}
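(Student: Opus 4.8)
The statement is the classical It\^o existence and uniqueness theorem for SDEs under linear growth and Lipschitz hypotheses, so the plan is to run a Picard iteration in the space of $t$-continuous, $L^2$-bounded processes adapted to $\mathcal{F}^Z_t$. First I would set $X^{(0)}_t\equiv Z$ and define recursively $X^{(k+1)}_t = Z + \int_0^t b(s,X^{(k)}_s)\dd s + \int_0^t \Sigma(s,X^{(k)}_s)\dd W_s$, then check by induction that each iterate is well defined in that space: the linear growth bound $|b(t,x)|+|\Sigma(t,x)|\le C(1+|x|)$ together with the It\^o isometry yields $E[|X^{(k+1)}_t|^2]\le 3E[|Z|^2]+3C^2(1+T)\int_0^t(1+E[|X^{(k)}_s|^2])\dd s$, and Gr\"onwall's inequality (or a direct induction) then gives a bound on $\sup_{t\le T}E[|X^{(k)}_t|^2]$ uniform in $k$. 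This uniform bound, carried to the limit, also supplies the concluding integrability claim $E[\int_0^T|X_t|^2\dd t]<\infty$.

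The heart of the argument is the Cauchy estimate. Writing $\Delta^{(k)}_t:=X^{(k+1)}_t-X^{(k)}_t$, I would bound $E[\sup_{s\le t}|\Delta^{(k)}_s|^2]$ by splitting into the drift and diffusion contributions, applying Cauchy--Schwarz to the Lebesgue integral and Doob's $L^2$ maximal inequality combined with the It\^o isometry to the stochastic integral, and then invoking the Lipschitz bound $|b(t,x)-b(t,y)|+|\Sigma(t,x)-\Sigma(t,y)|\le D|x-y|$; this produces $E[\sup_{s\le t}|\Delta^{(k)}_s|^2]\le L\int_0^t E[\sup_{r\le s}|\Delta^{(k-1)}_r|^2]\dd s$ for a constant $L$ depending only on $C,D,T$. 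Iterating gives $E[\sup_{s\le T}|\Delta^{(k)}_s|^2]\le M(LT)^k/k!$, which is summable, so $(X^{(k)})$ converges in $L^2$ uniformly on $[0,T]$ and, along a subsequence, almost surely uniformly; hence the limit $X_t$ is $t$-continuous and adapted to $\mathcal{F}^Z_t$. Passing to the limit inside both integrals --- once more via the It\^o isometry and the Lipschitz bound --- shows that $X$ solves the SDE with $X_0=Z$.

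For uniqueness I would take two solutions $X,Y$ with the stated properties, set $g(t):=E[\sup_{s\le t}|X_s-Y_s|^2]$ (localising first by a stopping time to guarantee finiteness, then removing the localisation by monotone convergence), and run the same drift/diffusion split to obtain $g(t)\le L\int_0^t g(s)\dd s$; Gr\"onwall's inequality then forces $g\equiv 0$, so $X$ and $Y$ are indistinguishable. The hard part will be the bookkeeping in the Cauchy step: keeping all constants independent of $k$ and combining Doob's inequality with the It\^o isometry so that the factorial gain $(LT)^k/k!$ genuinely appears, together with the modest measurability and localisation care needed to be sure every expectation is finite before each Gr\"onwall argument is applied.
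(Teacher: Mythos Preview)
Your sketch is correct and follows the standard Picard iteration argument (linear growth for uniform $L^2$ bounds, Lipschitz plus Doob/It\^o isometry for the factorial Cauchy estimate, Gr\"onwall for uniqueness). There is nothing to compare against here: the paper does not prove this theorem but simply quotes it from {\O}ksendal's book, and what you have outlined is precisely {\O}ksendal's own proof of Theorem~5.2.1.
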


\begin{thm}[It\^os lemma]\label{ito}
Let $X_t$ be a process  given by
\begin{equation}
 \mathrm\mathrm{d}X_t=f(X_t,t)\mathrm\mathrm{d}t+\sigma(X_t,t)\mathrm\mathrm{d}W_t.
\end{equation}
Let $\psi(X_t,t)\in C(\mathbb{R}^n\times[0,\infty))$, then for a transformation $Z_t=[\psi_1(X_t,t),\dots,\psi_n(X_t,t)]$, $Z_t$
is again an It\^o process given by
\begin{equation}
 \mathrm\mathrm{d}Z_t=\frac{\partial \psi}{\partial t}(X_t,t)\mathrm\mathrm{d}t+
 \sum_{i=1}^n\frac{\partial \psi}{\partial x_i}(X_t,t)\mathrm\mathrm{d}X_{i,t}+
 \frac{1}{2}\sum_{i=1}^n\sum_{j=1}^n\frac{\partial^2 \psi}{\partial x_i\partial x_j}(X_t,t)\mathrm\mathrm{d}X_{j,t}\mathrm\mathrm{d}X_{i,t},
\end{equation}
where $\mathrm\mathrm{d}X_{j,t}\mathrm\mathrm{d}X_{i,t}$ is calculated according  to the standard rules
$(\mathrm\mathrm{d}t)^2=\mathrm\mathrm{d}t\,\mathrm\mathrm{d}W_{i,t}=0, ~\mathrm\mathrm{d}W_{j,t}\,\mathrm\mathrm{d}W_{i,t}=0, ~j\ne i, ~\mathrm\mathrm{d}W_{i,t}\,\mathrm\mathrm{d}W_{i,t}=\mathrm\mathrm{d}t$.
\end{thm}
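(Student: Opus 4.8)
The statement is the multidimensional It\^o formula, and I would prove it by the classical route: write the increment of $\psi(X_t,t)$ as a telescoping sum along refining partitions, Taylor-expand each term, and identify the limit of each of the resulting sums. The whole point is that the second-order increments $\mathrm{d}X_i\,\mathrm{d}X_j$ collapse onto the quadratic covariation of the driving noise, which here is deterministic and reproduces exactly the bookkeeping rules $\mathrm{d}W_i\,\mathrm{d}W_j=\delta_{ij}\,\mathrm{d}t$, $\mathrm{d}t\,\mathrm{d}W_i=0$, $(\mathrm{d}t)^2=0$ quoted in the statement.

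First I would set up the reductions. Since the coordinates $\psi_1,\dots,\psi_n$ are handled independently it suffices to take $\psi$ scalar-valued; and since the conclusion contains $\partial^2\psi/\partial x_i\partial x_j$ one in fact needs $\psi(\cdot,t)\in C^2$ and $\psi(x,\cdot)\in C^1$, which I assume throughout (mere continuity, as literally written, is not enough for the formula even to make sense). A localization argument---stop $X$ at the first time $|X_t|$ exceeds $R$, or $\int_0^t(|f|^2+|\Sigma|^2)\,\mathrm{d}s$ exceeds $R$---reduces everything to the case in which the paths of $X$, the coefficients $f$ and $\Sigma$, and the derivatives $\partial_t\psi,\partial_{x_i}\psi,\partial_{x_ix_j}\psi$ evaluated along $X$ are uniformly bounded on $[0,T]$; letting $R\to\infty$ and using that both sides of the claimed identity are local in $\omega$ then gives the general case.

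Next, fix $T>0$ and a sequence of partitions $0=t_0^N<\dots<t_{k_N}^N=T$ with mesh $\|\Pi_N\|\to 0$, write $\psi(X_T,T)-\psi(X_0,0)=\sum_k\bigl(\psi(X_{t_{k+1}},t_{k+1})-\psi(X_{t_k},t_k)\bigr)$, and Taylor-expand each increment to first order in $t$ and second order in $x$. This leaves four families of sums. Sum (i), $\sum_k\partial_t\psi(X_{t_k},t_k)\,\Delta t_k$, is an ordinary Riemann sum converging almost surely to $\int_0^T\partial_t\psi\,\mathrm{d}s$. Sum (ii), $\sum_k\sum_i\partial_{x_i}\psi(X_{t_k},t_k)\,\Delta X_{i,k}$, is treated by splitting $\Delta X_{i,k}$ into its drift part $\int f_i\,\mathrm{d}s$ and its stochastic part $\sum_j\int\Sigma_{ij}\,\mathrm{d}W_j$: the first produces a Riemann sum converging to $\int_0^T\partial_{x_i}\psi\,f_i\,\mathrm{d}s$, the second converges in $L^2$ to the It\^o integral $\int_0^T\partial_{x_i}\psi\,\Sigma_{ij}\,\mathrm{d}W_j$ by the It\^o isometry and the $L^2$-continuity of simple-process approximations of the integrand, so sum (ii) converges to $\sum_i\int_0^T\partial_{x_i}\psi\,\mathrm{d}X_{i,s}$. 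Sum (iv), the remainder, is controlled by the modulus of continuity of the second derivatives times $\sum_k|\Delta X_k|^2$ plus lower-order terms, and tends to $0$ in probability because $\max_k|\Delta X_k|\to 0$ by path continuity while $\sum_k|\Delta X_k|^2$ stays bounded in probability.

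The main obstacle is sum (iii), $\tfrac12\sum_k\sum_{i,j}\partial_{x_ix_j}\psi(X_{t_k},t_k)\,\Delta X_{i,k}\Delta X_{j,k}$, which must be shown to converge to $\tfrac12\sum_{i,j}\int_0^T\partial_{x_ix_j}\psi\,(\Sigma\Sigma^\top)_{ij}\,\mathrm{d}s$. I would replace $\Delta X_{i,k}\Delta X_{j,k}$ by $(\Sigma\Sigma^\top)_{ij}(X_{t_k},t_k)\,\Delta t_k$ and show that the weighted discrepancy $\sum_k\partial_{x_ix_j}\psi(X_{t_k},t_k)\bigl(\Delta X_{i,k}\Delta X_{j,k}-(\Sigma\Sigma^\top)_{ij}(X_{t_k},t_k)\Delta t_k\bigr)$ vanishes in $L^2$. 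Its heart is the deterministic quadratic covariation of the Wiener components, $\sum_k\Delta W_{i,k}\Delta W_{j,k}\to\delta_{ij}T$ in $L^2$, which comes from a short second-moment computation using independence and stationarity of increments: $\mathbb E\bigl[\,(\sum_k(\Delta W_{j,k})^2-T)^2\,\bigr]=2\sum_k(\Delta t_k)^2\le 2\|\Pi_N\|T\to 0$, and off the diagonal $\mathbb E\bigl[\,(\sum_k\Delta W_{i,k}\Delta W_{j,k})^2\,\bigr]=\sum_k(\Delta t_k)^2\to 0$. One then upgrades from the bare increments $\Delta W$ to $\Delta X$ (the $f\,\mathrm{d}t$ pieces are $O(\Delta t_k)$ and sum to $O(\|\Pi_N\|)$, hence negligible) and inserts the slowly varying weights $\partial_{x_ix_j}\psi(X_{t_k},t_k)$, whose oscillation over a subinterval is controlled by continuity and the boundedness secured in the reduction step. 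Assembling (i)--(iv) gives the integral form of the formula on $[0,T]$, and reading it in differential form yields the claimed expression for $\mathrm{d}Z_t$; this is precisely the argument carried out in \cite{Oksendal}.
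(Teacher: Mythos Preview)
Your outline is the standard partition--Taylor argument for the multidimensional It\^o formula and is correct as sketched, including the necessary strengthening of the regularity hypothesis on $\psi$ and the localization step. The paper, however, does not supply a proof of this theorem at all: it is stated in the appendix purely as a quoted result from \cite{Oksendal}, with no accompanying argument. Since you yourself identify your proof as ``precisely the argument carried out in \cite{Oksendal},'' your proposal matches what the paper defers to; there is nothing further to compare.
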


\begin{lm}[Higher moments of Wiener process, see e.g.~\cite{Oksendal}] \label{lemmastar}
For standard Brownian motion (Wiener process). Following formula for higher moments holds:
\begin{align*} \beta_k(t) &:= E[W_t^k], \quad k=0,\dots, t\leq0\\
\beta_k(t) &= \frac{k(k-1)}2 \int_0^t \beta_{k-2}(s) \dd s, \quad k\leq2, \quad \text{which implies} \\
E[W_t^k] &= \begin{cases} 0 & \mbox{k is odd,} \\ \frac{k!}{2^{\frac k2}(\frac k2)!} t^{\frac k2}, & \mbox{k is even}.\end{cases}
\end{align*}
\end{lm}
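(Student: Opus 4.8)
The plan is to obtain the moment recursion from It\^o's formula and then solve that recursion in closed form. First I would apply It\^o's lemma (Theorem~\ref{ito}) to the scalar map $x\mapsto x^k$ evaluated along $W_t$, an It\^o process with zero drift and diffusion coefficient $1$; this yields $\dd(W_t^k)=kW_t^{k-1}\dd W_t+\tfrac12 k(k-1)W_t^{k-2}\dd t$. Integrating over $[0,t]$ and taking expectations, the stochastic integral $\int_0^t kW_s^{k-1}\dd W_s$ drops out because it is a martingale started at $0$; then $\beta_k(t)=\tfrac12 k(k-1)\int_0^t\beta_{k-2}(s)\dd s$, which is exactly the asserted recursion.

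Next I would read off the closed form from this recursion together with the two base cases $\beta_0(t)=E[1]=1$ and $\beta_1(t)=E[W_t]=0$. For odd $k$ the recursion descends in steps of $2$ down to $\beta_1\equiv0$, so $\beta_k\equiv0$. For even $k$ I would verify by induction the ansatz $\beta_k(t)=c_k t^{k/2}$ with $c_0=1$: substituting gives $c_k t^{k/2}=\tfrac12 k(k-1)c_{k-2}\int_0^t s^{(k-2)/2}\dd s=(k-1)\,c_{k-2}\,t^{k/2}$, hence $c_k=(k-1)c_{k-2}$ and so $c_k=(k-1)(k-3)\cdots 3\cdot 1=(k-1)!!$. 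Rewriting this double factorial for even $k$ as $(k-1)!!=\dfrac{k!}{2^{k/2}(k/2)!}$ gives precisely $E[W_t^k]=\dfrac{k!}{2^{k/2}(k/2)!}\,t^{k/2}$. As a cross-check one can avoid It\^o's lemma and evaluate $E[W_t^k]=\int_{\mathbb{R}}x^k(2\pi t)^{-1/2}e^{-x^2/(2t)}\dd x$ directly: odd $k$ vanishes by symmetry, and for $k=2j$ the substitution $x=\sqrt t\,y$ reduces it to $t^{\,j}$ times the $2j$-th moment of a standard normal, $t^{\,j}(2j-1)!!$; the same answer also falls out of the moment generating function $E[e^{\lambda W_t}]=e^{\lambda^2 t/2}$ by matching Taylor coefficients in $\lambda$.

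The step that requires the most care is the claim that $\int_0^t kW_s^{k-1}\dd W_s$ has zero expectation, i.e.\ that it is a genuine martingale rather than merely a local one: this needs the integrand to be square-integrable on $[0,t]$, which rests on the (independently established) fact that all moments of $W_s$ are finite, so that $\int_0^t E[W_s^{2(k-1)}]\dd s<\infty$. Once that is in hand the argument is just bookkeeping, which is why the result is quoted from \cite{Oksendal} rather than proved in the body of the paper.
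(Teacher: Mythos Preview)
Your argument is correct and is the standard derivation: apply It\^o's lemma to $x\mapsto x^k$, take expectations to get the recursion, and then solve it by induction, with the square-integrability check justifying that the stochastic integral is a true martingale. The paper itself does not supply a proof of this lemma; it is stated in the appendix as a known fact with a reference to \cite{Oksendal}, and your proof is essentially the one found there.
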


\begin{lm}\label{fan}
Let $z$ be bivariate normal with zero mean and $s_1, s_2$ be nonnegative integers,then
$$ E[Z_1^{s_1}\,Z_2^{s_2}]=\begin{cases}
   0, & \mbox{if} ~~s_1+s_2 ~\mbox{is odd},\\
   \displaystyle\sigma_1^{s_1}\,\sigma_2^{s_2}\frac{s_1!\,s_2!}{2^{\frac{s_1+s_2}{2}}}\sum_{j=0}^\frac{\min{(s_1,s_2)}}{2}\frac{(2\mathrm{cor}(Z_1,Z_2))^{2j}}{(2j)!\left(\frac{s_1}{2}-j\right)!\left(\frac{s_2}{2}-j\right)!}, & \mbox{if} ~~s_1, s_2 ~\mbox{are even},\\
      \displaystyle\sigma_1^{s_1}\,\sigma_2^{s_2}\frac{s_1!\,s_2!}{2^{\frac{s_1+s_2-2}{2}}}\sum_{j=0}^\frac{\min{(s_1-1,s_2-1)}}{2}\frac{(2\mathrm{cor}(Z_1,Z_2))^{2j+1}}{(2j+1)!\left(\frac{s_1-1}{2}-j\right)!\left(\frac{s_2-1}{2}-j\right)!}, & \mbox{if} ~~s_1, s_2 ~\mbox{are odd}.
  \end{cases}
$$
\end{lm}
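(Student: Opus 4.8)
The plan is to reduce everything to the computation of mixed moments of a bivariate normal vector $(Z_1,Z_2)$ with zero mean, standard deviations $\sigma_1,\sigma_2$ and correlation $\rho:=\mathrm{cor}(Z_1,Z_2)$. First I would normalise: write $Z_i=\sigma_i\,\xi_i$ with $(\xi_1,\xi_2)$ standard bivariate normal, so that $E[Z_1^{s_1}Z_2^{s_2}]=\sigma_1^{s_1}\sigma_2^{s_2}\,E[\xi_1^{s_1}\xi_2^{s_2}]$ and it suffices to treat the unit-variance case. The parity statement is immediate from the symmetry $(\xi_1,\xi_2)\mapsto(-\xi_1,-\xi_2)$, which leaves the distribution invariant but multiplies $\xi_1^{s_1}\xi_2^{s_2}$ by $(-1)^{s_1+s_2}$; hence the moment vanishes whenever $s_1+s_2$ is odd. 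So the content is the two even/even and odd/odd formulas.

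For the nonvanishing cases I would use the standard representation $\xi_1=X$, $\xi_2=\rho X+\sqrt{1-\rho^2}\,Y$ with $X,Y$ independent standard normals, expand $\xi_2^{s_2}=\sum_{i=0}^{s_2}\binom{s_2}{i}\rho^{i}(1-\rho^2)^{(s_2-i)/2}X^{i}Y^{s_2-i}$ by the binomial theorem, and take expectations term by term using independence of $X$ and $Y$ together with Lemma \ref{lemmastar} (moments of a single Gaussian). Only terms with $s_2-i$ even survive the $Y$-expectation, and among those only terms with $s_1+i$ even survive the $X$-expectation; this forces $i$ to have the same parity as $s_1$ and $s_2$ (consistent with $s_1+s_2$ even), so one reindexes $i=2j$ in the even/even case and $i=2j+1$ in the odd/odd case, with $j$ running up to $\lfloor\min(s_1,s_2)/2\rfloor$ (equivalently $\min(s_1,s_2)/2$ or $\min(s_1-1,s_2-1)/2$ as written). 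Substituting the explicit Gaussian moments $E[X^{s_1+i}]=(s_1+i)!/(2^{(s_1+i)/2}((s_1+i)/2)!)$ and $E[Y^{s_2-i}]=(s_2-i)!/(2^{(s_2-i)/2}((s_2-i)/2)!)$ and collecting the powers of $2$, of $\rho$, and the factorials, one should recover exactly the stated closed forms; the factor $(1-\rho^2)^{(s_2-i)/2}$ must be re-expanded and the resulting double sum re-summed, or — more cleanly — one avoids it entirely by instead expanding $\xi_1^{s_1}\xi_2^{s_2}$ and repeatedly applying the Gaussian integration-by-parts (Stein) identity $E[\xi_1 g(\xi_1,\xi_2)]=E[\partial_{\xi_1}g]+\rho\,E[\partial_{\xi_2}g]$, which produces the recursion whose solution is the claimed sum.

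The routine but genuinely fiddly part — the main obstacle — is the bookkeeping that turns the binomial double sum into the single sum over $j$ with the clean coefficient $\frac{s_1!\,s_2!}{2^{(s_1+s_2)/2}}\cdot\frac{(2\rho)^{2j}}{(2j)!\,(\frac{s_1}{2}-j)!\,(\frac{s_2}{2}-j)!}$; here one must carefully track the $(1-\rho^2)$ powers and verify a Vandermonde-type summation identity, and separately check the edge cases where $\min(s_1,s_2)=0$ (the sum collapses to the product of individual moments) and where $|\rho|=1$ (degenerate, $Z_2=\pm\tfrac{\sigma_2}{\sigma_1}Z_1$, handled directly). Alternatively, since this is a classical result, I would simply cite Isserlis' theorem together with the known evaluation of the relevant Wick pairing count — the number of perfect matchings of $s_1$ "type-1" and $s_2$ "type-2" legs with $j$ cross-edges is $\binom{s_1}{2j}\binom{s_2}{2j}(2j)!\,(s_1-2j-1)!!\,(s_2-2j-1)!!$ — and observe that this count, multiplied by $\rho^{2j}$ and summed, rearranges into the displayed formula; this is exactly the route used in \cite{Ken77}[p.~94], to which I would defer for the purely combinatorial identity.
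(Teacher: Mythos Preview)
The paper gives no proof of this lemma at all: it is stated in the appendix as a known result, and where it is invoked (inside the proof of Lemma~\ref{lm:e}) the authors simply point to \cite{Isserlis18} and \cite{Ken77}[p.~94]. Your closing suggestion --- cite Isserlis' theorem and the Wick-pairing count, deferring to \cite{Ken77} for the combinatorics --- is therefore exactly what the paper does, and would already suffice here.

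Everything preceding that in your proposal (normalising to unit variance, the parity argument via $(\xi_1,\xi_2)\mapsto(-\xi_1,-\xi_2)$, the decomposition $\xi_2=\rho X+\sqrt{1-\rho^2}\,Y$ with Lemma~\ref{lemmastar}, or alternatively the Stein recursion) is a genuine proof sketch that goes well beyond the paper. The approach is sound; the only real labour, as you correctly identify, is collapsing the $(1-\rho^2)$ powers into the clean single sum in $j$, and the Wick-counting route avoids that entirely. There is no gap.
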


\end{document}